\newcommand{\vr}[1]{\boldsymbol{#1}}
\newcommand{\software}{\textsf{FAUST}$^{\mathsf 2}$\xspace}
\newtheorem{assumption}{Assumption}
\title{Dynamic Bayesian Networks as Formal Abstractions of Structured Stochastic Processes\footnote{This work was partially supported by the European Commission IAPP project AMBI 324432, and by the John Fell OUP Research Fund.}}
\titlerunning{DBNs as Formal Abstractions of Structured Stochastic Processes} 
\author[1]{Sadegh Esmaeil Zadeh Soudjani}
\author[2]{Alessandro Abate}
\author[3]{Rupak Majumdar}
\affil[1,3]{Max Planck Institute for Software Systems (MPI-SWS), Germany\\
  \texttt{\{sadegh,rupak\}@mpi-sws.org}}
\affil[2]{Department of Computer Science, University of Oxford, United Kingdom\\
  \texttt{alessandro.abate@cs.ox.ac.uk}}
\authorrunning{S. Esmaeil Zadeh Soudjani and A. Abate and R. Majumdar} 
\begin{document}

\maketitle

\begin{abstract} 
We study the problem of finite-horizon probabilistic invariance for discrete-time Markov processes over general (uncountable) state spaces.
We compute discrete-time, finite-state Markov chains as formal abstractions of general Markov processes.
Our abstraction differs from existing approaches in two ways. 
First, we exploit the structure of the underlying Markov process to compute the abstraction separately for each dimension.
Second, we employ dynamic Bayesian networks (DBN) as compact representations of the abstraction. 
In contrast, existing approaches represent and store the (exponentially large) Markov chain explicitly, 
which leads to heavy memory requirements limiting the application to models of dimension less than half, 
according to our experiments.  

We show how to construct a DBN abstraction of a Markov process satisfying an independence assumption on the driving process noise.
We compute a guaranteed bound on the error in the abstraction w.r.t.\ the probabilistic invariance property; the dimension-dependent
abstraction makes the error bounds more precise than existing approaches.
Additionally, we show how factor graphs and the sum-product algorithm for DBNs can be used to solve the finite-horizon probabilistic
invariance problem. 
Together, DBN-based representations and algorithms can be significantly more efficient than explicit 
representations of Markov chains for abstracting and model checking structured Markov processes. 
%
%

 \end{abstract}

\section{Introduction}
\label{sec:intro}


Markov processes over general uncountable state spaces appear in many areas of engineering such as power networks, transportation, biological systems, robotics, and manufacturing systems.
The importance of this class of stochastic processes in applications has motivated a significant research effort into their foundations and their verification. 

We study the problem of algorithmically verifying finite-horizon probabilistic invariance for Markov processes, which is the problem of computing the probability that 
a stochastic process remains within a given set for a given finite time horizon. 
For finite-state stochastic processes, 
there is a mature theory of model checking discrete-time Markov chains \cite{BK08}, 
and a number of probabilistic model checking tools \cite{KKZ05,KNP11} 
that compute explicit solutions to the verification problem. 
On the other hand, stochastic processes taking values over uncountable state spaces may not have explicit
solutions and their numerical verification problems are undecidable even for simple dynamics \cite{APKL10}.
A number of studies have therefore explored \emph{abstraction} techniques that reduce the given stochastic process (over a general state space)
to a finite-state process, while preserving properties in a quantitative sense \cite{APKL10,SA11}.
The abstracted model allows the application of standard model checking techniques over finite-state models.  
The work in \cite{APKL10} has further shown that an explicit error can be attached to the abstraction. 
This error is computed purely based on continuity properties of the 
concrete Markov process. 
Properties proved on the finite-state abstraction can be used to reason about properties of the original system.
The overall approach has been extended to linear temporal specifications \cite{TA13} and
software tools have been developed to automate the abstraction procedure \cite{FAUST15}. 

In previous works, 
the structure of the underlying Markov process (namely, the interdependence among its variables) has not been actively reflected in the abstraction algorithms, 
and the finite-state Markov chain has been always represented explicitly,  
which is quite expensive in terms of memory requirements.   
In many applications, 
the dynamics of the Markov process, which are characterized by a conditional kernel, often exhibit specific structural properties. 
More specifically, the dynamics of any state variable depends on only a small number of 
other state variables and the process noise driving each state variable is assumed to be independent.  
Examples of such structured systems are models of power grids and sensor-actuator networks as
large-scale interconnected networks \cite{S05} and mass-spring-damper systems \cite{AVDSMAT12,A12}.
	
We present an abstraction and model checking algorithm for discrete-time stochastic dynamical systems over general (uncountable) state spaces.
Our abstraction constructs a finite-state Markov abstraction of the process, 
but differs from previous work in that it is based on a dimension-dependent partitioning of the state space.
Additionally, we perform a precise dimension-dependent analysis of the error introduced by the abstraction, and our error bounds can
be exponentially smaller than the general bounds obtained in \cite{APKL10}. 
Furthermore, we represent the abstraction as a dynamic Bayesian network (DBN) \cite{KF09} instead of explicitly
representing the probabilistic transition matrix.
The Bayesian network representation uses independence assumptions in the model to provide 
potentially polynomial sized representations (in the number of dimensions)
for the Markov chain abstraction for which the explicit transition matrix is exponential in the dimension.
We show how factor graphs and the sum-product algorithm, developed for belief propagation in Bayesian networks, 
can be used to model check probabilistic invariance properties without constructing the transition matrix.
Overall, our approach leads to significant reduction in computational and memory resources for model checking structured
Markov processes and provides tighter error bounds.


The material is organized in six sections.  
Section \ref{sec:prelim} defines discrete-time Markov processes 
and the probabilistic invariance problem. 
Section~\ref{sec:DBN_abstraction} presents a new algorithm for 
abstracting a process to a DBN, 
together with the quantification of the abstraction error. 
We discuss efficient model checking of the constructed DBN in Section \ref{sec:eff_comp}, 
and apply the overall abstraction algorithm to a case study in Section \ref{sec:case_study}. 
Section \ref{sec:concl} outlines some further directions of investigation. 
Proofs of statements are included in the Appendix. 

\section{Markov Processes and Probabilistic Invariance}
\label{sec:prelim}


\subsection{Discrete-Time Markov Processes}
\label{subsec:dt-MP}

We write $\mathbb N$ for the non-negative integers $\mathbb N =\{0,1,2,\ldots\}$ and $\mathbb N_n = \{1,2,\ldots,n\}$.
We use bold typeset for vectors and normal typeset for one-dimensional quantities.

We consider a discrete-time Markov process $\mathscr M_{\mathfrak s}$ defined over a general state space, 
and characterized by the tuple $(\mathcal S,\mathcal B, T_{\mathfrak s})$: 
$\mathcal S$ is the continuous state space, which we assume to be endowed with a metric and to be separable\footnote{
A metric space $\mathcal S$ is called separable if it has a countable dense subset.}; 
$\mathcal B$ is the Borel $\sigma$-algebra associated to $\mathcal S$, 
which is the smallest $\sigma$-algebra containing all open subsets of $\mathcal S$; 
and $T_{\mathfrak s}:\mathcal S\times\mathcal B\rightarrow[0,1]$ is a stochastic kernel, so that
$T_{\mathfrak s}(\cdot,B)$ is a non-negative measurable function for any set $B\in\mathcal B$, and
$T_{\mathfrak s}(\vr s,\cdot)$ is a probability measure on $(\mathcal S,\mathcal B)$ for any $\vr s\in\mathcal S$. 
Trajectories (also called traces or paths) of $\mathscr M_{\mathfrak s}$ are sequences 
$(\vr s(0),\vr s(1),\vr s(2),\ldots)$ which belong to the set $\Omega = \mathcal S^{\mathbb N}$.
The product $\sigma$-algebra on $\Omega$ is denoted by $\mathcal F$. 
Given the initial state $\vr s(0) = \vr s_0\in\mathcal S$ of $\mathscr M_{\mathfrak s}$, 
the stochastic Kernel $T_{\mathfrak s}$ induces a unique probability measure $\mathcal P$ on $(\Omega,\mathcal F)$ that satisfies the Markov property: 
namely for any measurable set $B\in\mathcal B$ and any $t \in\mathbb N$
\begin{equation*}
	\mathcal P \left(\vr s(t+1)\in\ B| \vr s(0),\vr s(1),\ldots,\vr s(t)\right) = \mathcal P \left(\vr s(t+1)\in\ B| \vr s(t)\right) = T_{\mathfrak s}(\vr s(t),B).
\end{equation*}
We assume that the stochastic kernel $T_{\mathfrak s}$ admits a density function $t_{\mathfrak s}:\mathcal S\times\mathcal S\rightarrow\mathbb R_{\ge 0}$,
such that $T_{\mathfrak s}(\vr s,B) = \int_B t_{\mathfrak s}(\bar{\vr s}|\vr s)d\bar{\vr s}$.

\smallskip

A familiar class of discrete-time Markov processes is that of stochastic dynamical systems. If $\{\vr \zeta(t),\,\,t\in\mathbb N\}$ is a sequence of independent and identically distributed (iid) random variables taking values in $\mathbb R^n$, and $\vr f:\mathcal S\times \mathbb R^n\rightarrow \mathcal S$ is a measurable map, then the recursive equation 
\begin{equation}
	\label{eq:dyn_system}
	\vr s(t+1) = \vr f(\vr s(t),\vr \zeta(t)),\quad\forall t\in\mathbb N,\quad \vr s(0) = \vr s_0\in\mathcal S,
\end{equation}
induces a Markov process that is characterized by the kernel
\begin{equation*}
T_{\mathfrak s}(\vr s,B) = T_{\zeta}\left(\vr \zeta\in\mathbb R^n\,:\,\vr f(\vr s,\vr \zeta)\in B\right),
\end{equation*}
where $T_{\zeta}$ is the distribution of the r.v. $\vr \zeta(0)$ (in fact, of any $\vr \zeta(t)$ since these are iid random variables). 
In other words, the map $\vr f$ together with the distribution of the r.v. $\{\vr \zeta(t)\}$ uniquely define the stochastic kernel of the process. 
The converse is also true as shown in \cite[Proposition 7.6]{k2002}: any discrete-time Markov process $\mathscr M_{\mathfrak s}$ admits a dynamical representation as in \eqref{eq:dyn_system}, 
for an appropriate selection of function $\vr f$ and distribution of the r.v. $\{\vr \zeta(t)\}$. 


Let us expand the dynamical equation \eqref{eq:dyn_system} explicitly over its states $\vr s = [s_1,\ldots,s_n]^T$, 
map components $\vr f = [f_1,\ldots,f_n]^T$, 
and uncertainly terms $\vr \zeta = [\zeta_1,\ldots,\zeta_n]^T$, as follows: 
\begin{equation}
	\label{eq:dyn_system_expand}
	\begin{array}{l}
		s_1(t+1) = f_1(s_1(t),s_2(t),\ldots,s_n(t),\zeta_1(t)),\\
		s_2(t+1) = f_2(s_1(t),s_2(t),\ldots,s_n(t),\zeta_2(t)),\\
		\quad \vdots\\
		s_n(t+1) = f_n(s_1(t),s_2(t),\ldots,s_n(t),\zeta_n(t)).
	\end{array}
\end{equation}
In this article we are interested in exploiting the knowledge of the structure of the dynamics in \eqref{eq:dyn_system_expand} for formal verification via abstractions \cite{APKL10,SA11,SA13}.
We focus our attention to continuous (unbounded and uncountable) Euclidean spaces $\mathcal S = \mathbb R^n$, 
and further assume that for any $t\in\mathbb N$,  $\zeta_k(t)$ are independent for all $k\in\mathbb N_n$. 
This latter assumption is widely used in the theory of dynamical systems, 
and allows for the following multiplicative structure on the conditional density function of the process: 
\begin{equation}
\label{eq:prod_structure}
	t_{\mathfrak s}(\bar{\vr s}|\vr s) = t_1(\bar s_1|\vr s)t_2(\bar s_2|\vr s)\ldots t_n(\bar s_n|\vr s),
\end{equation}
where the function $t_k:\mathbb R^n\times\mathbb R\rightarrow\mathbb R_{\ge 0}$
solely depends on the map $f_k$ and the distribution of $\zeta_k$. 
The reader is referred to Section \ref{sec:case_study} for the detailed computation of the functions $t_k$ from the dynamical equations in \eqref{eq:dyn_system_expand}.

\begin{remark} 
The results of this article are presented under the structural assumption that $\zeta_k(\cdot)$ are independent over $k\in\mathbb N_n$. 
These results can be generalized to a broader class of processes by allowing inter-dependencies between the entries of the process noise, 
which requires 
partitioning the set of entries of $\vr\zeta(\cdot)$ so that any two entries from different partition sets are independent, 
whereas entries within a partition set may still be dependent. 
This assumption induces a multiplicative structure on $t_{\mathfrak s}(\bar{\vr s}|\vr s)$ with respect to the partition, 
which is similar to \eqref{eq:prod_structure}.
The finer the partition, the more efficient is our abstraction process.
\end{remark}

\begin{example}
	Figure~\ref{fig:exm_msp} shows a system of $n$ masses connected by springs and dampers.
	For $i\in\mathbb N_n$, block $i$ has mass $m_i$,
	the $i^{\text{th}}$ spring has stiffness $k_i$, 
	and the $i^{\text{th}}$ damper has damping coefficient $b_i$. 
	The first mass is connected to a fixed wall by the left-most spring/damper connection. 
	All other masses are connected to the previous mass with a spring and a damper. 
	A force $\zeta_i$ is applied to each mass, modeling the effect of a disturbance or of process noise. 
	The dynamics of the overall system is comprised of the position and velocity of the blocks. 
	It can be shown that the dynamics in discrete time take the form $\vr s(t+1) = \Phi \vr s(t)+\vr \zeta(t)$, where $\vr s(t)\in\mathbb R^{2n}$ with $s_{2i-1}(t),s_{2i}(t)$ indicating the velocity and position of mass $i$. 
	The state transition matrix $\Phi = [\Phi_{ij}]_{i,j}\in\mathbb R^{2n\times 2n}$ is a band matrix with lower and upper bandwidth $3$ and $2$, respectively ($\Phi_{ij} = 0$ for $j<i-3$ and for $j>i+2$). \qed
\end{example}
\begin{example}
	A second example of structured dynamical systems is a discrete-time large-scale interconnected system.  
	Consider an interconnected system of $N_{\mathfrak d}$ heterogeneous linear time-invariant (LTI) subsystems described by the following stochastic difference equations: 
	\begin{align*}
	\vr s_i(t+1) = \Phi_i \vr s_i(t) + \sum_{j\in N_i} G_{ij}\vr s_j(t) + B_i \vr u_i(t)+\vr\zeta_i(t),
	\end{align*}
	where $i\in\mathbb N_{N_{\mathfrak d}}$ denotes the $i^{\text{th}}$ subsystem and $\vr s_i\in\mathbb R^{n\times 1}, \vr u_i\in\mathbb R^{p\times 1}, \vr\zeta_i\in\mathbb R^{m\times 1}$ are the state, the input, and the process noise of subsystem $i$. The term $\sum_{j\in N_i} G_{ij}\vr s_j(t)$ represents the physical interconnection between the subsystems where $N_i$, $|N_i|\ll N_{\mathfrak d}$, is the set of subsystems to which system $i$ is physically connected. The described interconnected system can be found in many application areas including smart power grids, traffic systems, and sensor-actuator networks \cite{GH11}. \qed
\end{example}

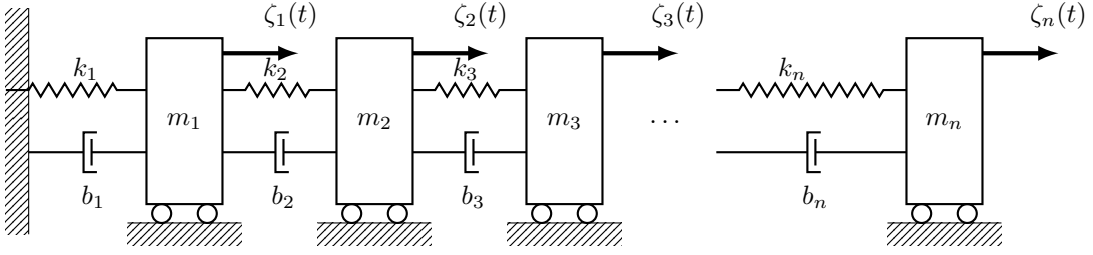
\begin{figure}
	\begin{tikzpicture}[every node/.style={draw,outer sep=0pt,thick}]
	\tikzstyle{spring}=[thick,decorate,decoration={zigzag,pre length=0.3cm,post length=0.3cm,segment length=6}]
	\tikzstyle{damper}=[thick,decoration={markings,  
		mark connection node=dmp,
		mark=at position 0.5 with 
		{
			\node (dmp) [thick,inner sep=0pt,transform shape,rotate=-90,minimum width=15pt,minimum height=3pt,draw=none] {};
			\draw [thick] ($(dmp.north east)+(2pt,0)$) -- (dmp.south east) -- (dmp.south west) -- ($(dmp.north west)+(2pt,0)$);
			\draw [thick] ($(dmp.north)+(0,-5pt)$) -- ($(dmp.north)+(0,5pt)$);
		}
	}, decorate]
	\tikzstyle{ground}=[fill,pattern=north east lines,draw=none,minimum width=0.75cm,minimum height=0.3cm]
	
	\begin{scope}[xshift=3cm]
	\node (M1) [minimum width=1cm, minimum height=2.2cm] {$m_1$};
	\node (ground1) [ground,anchor=north,yshift=-0.25cm,minimum width=1.5cm] at (M1.south) {};
	\draw (ground1.north east) -- (ground1.north west);
	\draw [thick] (M1.south west) ++ (0.2cm,-0.125cm) circle (0.125cm)  (M1.south east) ++ (-0.2cm,-0.125cm) circle (0.125cm);
	\node (wall) [ground, rotate=-90, minimum width=3cm,yshift=-2.2cm] {};
	\draw (wall.north east) -- (wall.north west);
	\draw [spring] (wall.200) -- ($(M1.north west)!(wall.200)!(M1.south west)$);
	\draw [damper] (wall.20) -- ($(M1.north west)!(wall.20)!(M1.south west)$);
	\node (consd1) [draw=none,fill=none,xshift=-1.2cm,yshift=-1cm] {$b_1$};
	\node (consd1) [draw=none,fill=none,xshift=-1.3cm,yshift=0.7cm] {$k_1$};
	\draw [-latex,ultra thick] (M1.north east) ++ (0,-0.2cm) -- +(1cm,0); 
	\end{scope}
	\begin{scope}[xshift=5.5cm]
	\node (M2) [minimum width=1cm, minimum height=2.2cm] {$m_2$};
	\node (ground2) [ground,anchor=north,yshift=-0.25cm,minimum width=1.5cm] at (M2.south) {};
	\draw (ground2.north east) -- (ground2.north west);
	\draw [thick] (M2.south west) ++ (0.2cm,-0.125cm) circle (0.125cm)  (M2.south east) ++ (-0.2cm,-0.125cm) circle (0.125cm);
	\draw [spring] ($(M1.north east)!(wall.200)!(M1.south east)$) -- ($(M2.north west)!(wall.200)!(M2.south west)$);
	\draw [damper] ($(M1.north east)!(wall.20)!(M1.south east)$) -- ($(M2.north west)!(wall.20)!(M2.south west)$);
	\node (consd1) [draw=none,fill=none,xshift=-1.2cm,yshift=-1cm] {$b_2$};
	\node (consd1) [draw=none,fill=none,xshift=-1.3cm,yshift=0.7cm] {$k_2$};
	\node (consd1) [draw=none,fill=none,xshift=-1.1cm,yshift=1.4cm] {$\zeta_1(t)$};
	\draw [-latex,ultra thick] (M2.north east) ++ (0,-0.2cm) -- +(1cm,0);
	\end{scope}
	\begin{scope}[xshift=8cm]
	\node (M3) [minimum width=1cm, minimum height=2.2cm] {$m_3$};
	\node (ground3) [ground,anchor=north,yshift=-0.25cm,minimum width=1.5cm] at (M3.south) {};
	\draw (ground3.north east) -- (ground3.north west);
	\draw [thick] (M3.south west) ++ (0.2cm,-0.125cm) circle (0.125cm)  (M3.south east) ++ (-0.2cm,-0.125cm) circle (0.125cm);
	\draw [spring] ($(M2.north east)!(wall.200)!(M2.south east)$) -- ($(M3.north west)!(wall.200)!(M3.south west)$);
	\draw [damper] ($(M2.north east)!(wall.20)!(M2.south east)$) -- ($(M3.north west)!(wall.20)!(M3.south west)$);
	\node (consd1) [draw=none,fill=none,xshift=-1.2cm,yshift=-1cm] {$b_3$};
	\node (consd1) [draw=none,fill=none,xshift=-1.3cm,yshift=0.7cm] {$k_3$};
	\node (consd1) [draw=none,fill=none,xshift=-1.1cm,yshift=1.4cm] {$\zeta_2(t)$};
	\draw [-latex,ultra thick] (M3.north east) ++ (0,-0.2cm) -- +(1cm,0);
	\node (contin) [draw=none,fill=none,xshift=1.5cm] {$\ldots\quad$};
	\end{scope}
	\begin{scope}[xshift=13cm]
	\node (Mn) [minimum width=1cm, minimum height=2.2cm] {$m_n$};
	\node (groundn) [ground,anchor=north,yshift=-0.25cm,minimum width=1.5cm] at (Mn.south) {};
	\draw (groundn.north east) -- (groundn.north west);
	\draw [thick] (Mn.south west) ++ (0.2cm,-0.125cm) circle (0.125cm)  (Mn.south east) ++ (-0.2cm,-0.125cm) circle (0.125cm);	
	\draw [spring] ($(contin.north east)!(wall.200)!(contin.south east)$) -- ($(Mn.north west)!(wall.200)!(Mn.south west)$);
	\draw [damper] ($(contin.north east)!(wall.20)!(contin.south east)$) -- ($(Mn.north west)!(wall.20)!(Mn.south west)$);
	\node (consd1) [draw=none,fill=none,xshift=-1.7cm,yshift=-1cm] {$b_n$};
	\node (consd1) [draw=none,fill=none,xshift=-2cm,yshift=0.7cm] {$k_n$};
	\node (consd1) [draw=none,fill=none,xshift=-3.5cm,yshift=1.4cm] {$\zeta_{3}(t)$};
	\node (consd1) [draw=none,fill=none,xshift=1.5cm,yshift=1.4cm] {$\zeta_n(t)$};
	\draw [-latex,ultra thick] (Mn.north east) ++ (0,-0.2cm) -- +(1cm,0);
	\end{scope}
	\end{tikzpicture}
	\caption{$n$-body mass-spring-damper system.}
	\label{fig:exm_msp}
\end{figure}

\subsection{Probabilistic Invariance}
\label{subsec:safety}

We focus on verifying probabilistic invariance, which 
plays a central role in verifying properties of a system expressed as 
PCTL formulae or as linear temporal specifications \cite{BK08,rcsl2010,TA13}.

\begin{definition}[Probabilistic Invariance]
Consider a bounded Borel set $A\in\mathcal B$, representing a set of safe states.
The finite-horizon \emph{probabilistic invariance problem} asks to compute the probability that a trajectory of $\mathscr M_{\mathfrak s}$ associated with an initial condition $\vr s_0$
remains within the set $A$ during the finite time horizon $N$:
\begin{equation*}
	p_N(\vr s_0,A) = \mathcal P\{\vr s(t)\in A\text{ for all } t=0,1,2,\ldots,N| \vr s(0) =\vr s_0\}.
\end{equation*}
\end{definition}

This quantity allows us to extend the result to a general probability distribution $\pi:\mathcal B\rightarrow [0,1]$ for the initial state $\vr s(0)$ of the system as
\begin{equation}
\label{eq:initial_dis}
\mathcal P\{\vr s(t)\in A\text{ for all } t=0,1,2,\ldots,N\} = \int_{\mathcal S} p_N(\vr s_0,A)\pi(d\vr s_0).	
\end{equation}
Solution of the probabilistic invariance problem can be characterized via the value functions $V_k:\mathcal S\rightarrow[0,1]$, $k=0,1,2,\ldots,N$, defined by the following Bellman backward recursion \cite{APKL10}:
\begin{equation}
	\label{eq:bellman_rec}
	V_k(\vr s) = \vr 1_A(\vr s) \int_A V_{k+1}(\bar{\vr s})t_{\mathfrak s}(\bar{\vr s}|\vr s)d\bar{\vr s}\,\,\text{ for }\,\,k=0,1,2,\ldots,N-1.
\end{equation}
This recursion is initialized with $V_N(\vr s) = \vr 1_A(\vr s)$, where $\vr 1_A(\vr s)$ is the indicator function which is $1$ if $\vr s\in A$ and $0$ otherwise,
and results in the solution $p_N(\vr s_0,A) = V_0(\vr s_0)$.

Equation \eqref{eq:bellman_rec} characterizes the finite-horizon probabilistic invariance quantity as the solution
of a dynamic programming problem. However, since its explicit solution is in general not available, the actual computation of the quantity $p_N(\vr s_0,A)$ requires $N$ numerical integrations at each state in the set $A$.
This is usually performed with techniques based on state-space discretization \cite{B75}.

\section{Formal Abstractions as Dynamic Bayesian Networks}
\label{sec:DBN_abstraction}


\subsection{Dynamic Bayesian Networks}
\label{subsec:DBN}


A Bayesian network (BN) is a tuple $\mathfrak B = (\mathcal V,\mathcal E,\mathcal T)$. The pair 
$(\mathcal V,\mathcal E)$ is a directed Acyclic Graph (DAG) representing the structure of the network. 
The nodes in $\mathcal V$ are (discrete or continuous) random variables and the arcs in $\mathcal E$ represent the dependence relationships among the random variables. 
The set $\mathcal T$ contains conditional probability distributions (CPD) in forms of tables or density functions for discrete and continuous random variables, respectively.
In a BN, knowledge is represented in two ways: 
qualitatively, as dependences between variables by means of the DAG; and 
quantitatively, as conditional probability distributions attached to the dependence relationships. 
Each random variable $X_i\in\mathcal V$ is associated with a conditional probability distribution $\mathbb P(X_i|Pa(X_i))$,
where $Pa(Y)$ represents the parent set of the variable $Y\in\mathcal V$: $Pa(Y) = \{X\in\mathcal V|(X,Y)\in\mathcal E\}$.
A BN is called \emph{two-layered} if the set of nodes $\mathcal V$ can be partitioned to two 
sets $\mathcal V_1,\mathcal V_2$ with the same cardinality such that only the nodes in the second layer $\mathcal V_2$ have an associated CPD.

A dynamic Bayesian network \cite{KF09,Murphy02} is a way to extend Bayesian networks to model probability distributions over collections of random variables $X(0),X(1),X(2),\ldots$ indexed by time $t$.
A DBN\footnote{The DBNs considered in this paper
are stationary (the structure of the network does not change with the time index $t$).
They have no input variables and are fully observable: the output of the DBN model equals to its state.} 
is defined to be a pair $(\mathfrak B_0,\mathfrak B_{\rightarrow})$, where $\mathfrak B_0$ is a BN which defines the distribution of $X(0)$, 
and $\mathfrak B_{\rightarrow}$ is a two-layered BN that defines the transition probability distribution for $(X(t+1)|X(t))$.  

\subsection{DBNs as Representations of Markov Processes}
\label{subsec:cDBN}

We now show that any discrete-time Markov process $\mathscr M_{\mathfrak s}$ over $\mathbb R^n$ can be represented as a DBN $(\mathfrak B_0, \mathfrak B_{\rightarrow})$ over
$n$ continuous random variables.
The advantage of the reformulation is that it makes the dependencies between random variables explicit.

The BN $\mathfrak B_0$ is trivial for a given initial state of the Markov process $\vr s(0) = \vr s_0$.
The DAG of $\mathfrak B_0$ has the set of nodes $\{X_1,X_2,\ldots,X_n\}$ without any arc.
The Dirac delta distribution located in the initial state of the process is assigned to each node of $\mathfrak B_0$.\footnote{
	For a general initial probability distribution $\pi:\mathcal B\rightarrow[0,1]$, a set of arcs must be added to 
	reflect its possible product structure. 
	This construction is not important at the current stage because of the backward recursion formulation of the probabilistic safety
	(please refer to \eqref{eq:initial_dis} in Section \ref{subsec:safety}).}
The DAG for the two-layered BN $\mathfrak B_{\rightarrow} = (\mathcal V,\mathcal E,\mathcal T)$ comprises a set of nodes
$\mathcal V = \mathcal V_1\cup\mathcal V_2$, 
with $\mathcal V_1 = \{X_1,X_2,\ldots,X_n\}$ and $\mathcal V_2 = \{\bar X_1,\bar X_2,\ldots,\bar X_n\}$.
Each arc in $\mathcal E$ connects a node in $\mathcal V_1$ to another node in $\mathcal V_2$;  
$(X_i,\bar X_j)\in\mathcal E$ if and only if $t_j(\bar s_j|\vr s)$ is not a constant function of $s_i$.
The set $\mathcal T$ assigns a CPD to each node $\bar X_j$ according to the density function $t_j(\bar s_j|\vr s)$.
%

\begin{example}
	\label{exm:linear_system}
	Consider the following stochastic linear dynamical system: 
	\begin{equation}
		\label{eq:exmp_linear}
		\vr s(t+1) = \Phi \vr s(t)+\vr \zeta(t)\quad t\in\mathbb N,\quad \vr s(0) = \vr s_0 = [s_{01},s_{02},\ldots,s_{0n}]^T,
	\end{equation}
	where $\Phi = [a_{ij}]_{i,j}$ is the system matrix and $\vr \zeta(t)\sim\mathcal N(0,\Sigma)$ are independent Gaussian r.v. for any $t\in\mathbb N$.
	The covariance matrix $\Sigma$ is assumed to be full rank. Consequently, a linear transformation can be employed to change the coordinates and obtain a stochastic linear system with a
	diagonal covariance matrix. Then without loss of generality we assume $\Sigma = diag([\sigma_1^2,\sigma_2^2,\ldots,\sigma_n^2])$, which clearly satisfies the independence assumption on the process noise raised in Section \ref{subsec:dt-MP}. 
	Model \eqref{eq:exmp_linear} for a lower bidiagonal matrix $\Phi$ can be expanded as follows: 
	\begin{align*}
		& s_1(t+1) = a_{11} s_1(t) + \zeta_1(t)\\
		& s_2(t+1) = a_{21} s_1(t) + a_{22} s_2(t) + \zeta_2(t)\\
		& s_3(t+1) = a_{32} s_2(t) + a_{33} s_3(t) + \zeta_3(t)\\
		& \hspace{0.3in} \vdots\\
		& s_n(t+1) = a_{n(n-1)} s_{n-1}(t) + a_{nn} s_n(t) + \zeta_n(t),
	\end{align*}
	where $\zeta_i(\cdot),\,i\in\mathbb N_n$ are independent Gaussian r.v. $\mathcal N(0,\sigma_i^2)$.
	The conditional density function of the system takes the following form: 
	\begin{equation*}
		t_{\mathfrak s}(\bar{\vr s}|\vr s) = t_1(\bar s_1|s_1)t_2(\bar s_2|s_1,s_2)t_3(\bar s_3|s_2,s_3)\ldots t_n(\bar s_n|s_{n-1},s_n).
	\end{equation*}
	The DAG of the two-layered BN $\mathfrak B_{\rightarrow}$ associated with this system is sketched in Figure \ref{fig:DBN_graph} for $n = 4$.
	\begin{figure}
		\centering
		\includegraphics{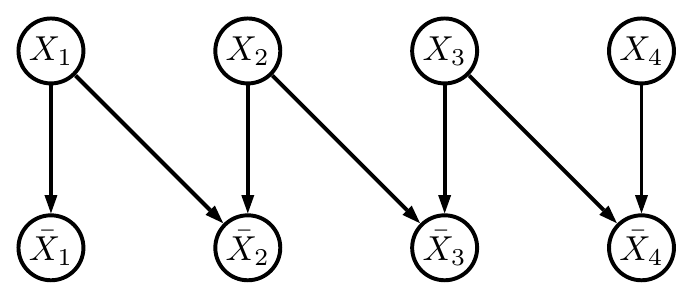}
		\caption{Two-layered BN $\mathfrak B_{\rightarrow}$ associated with the stochastic linear dynamical system in \eqref{eq:exmp_linear} for $n=4$.}
		\label{fig:DBN_graph}
	\end{figure} 
	The BN $\mathfrak B_0$ has an empty graph on the set of nodes $\{X_1,\ldots,X_n\}$ with the associated Dirac delta density functions located at $s_{0i}$, $\delta_d(s_i(0)-s_{0i})$.
	\qed
\end{example}


\subsection{Finite Abstraction of Markov Processes as Discrete DBNs}
\label{subsec:CPD}

Let $A\in \mathcal B$ be a bounded Borel set of safe states.
We abstract the structured Markov process $\mathscr M_{\mathfrak s}$
interpreted in the previous section as a DBN with continuous variables
to a DBN with discrete random variables. 
Our abstraction is relative to the set $A$.
%
Algorithm \ref{algo:DBN_app} provides the steps of the abstraction procedure.
It consists of discretizing each dimension into a finite number of bins.

In Algorithm \ref{algo:DBN_app},
the projection operators $\Pi_i:\mathbb R^n\rightarrow\mathbb R,\,i\in\mathbb N_n,$ are defined as $\Pi_i(\vr s) = s_i$ for any $\vr s = [s_1,\ldots,s_n]^T\in\mathbb R^n$.  
These operators are used to project the safe set $A$ over different dimensions, $D_i \doteq \Pi_i(A)$. 
In step \ref{alg:step_partition} of the Algorithm, 
set $D_i$ is partitioned as $\{D_{ij}\}_{j=1}^{n_i}$ (for any $i\in\mathbb N_n$, $D_{ij}$'s are arbitrary but non-empty, non-intersecting, and  $D_i= \cup_{j=1}^{n_i} D_{ij}$). 
The corresponding representative points $z_{ij} \in D_{ij}$ are also chosen arbitrarily. 
%
Step \ref{alg_step_supp} of the algorithm constructs the support of the random variables in $\mathfrak B_{\rightarrow}$, $\mathcal V = \{X_i,\bar X_i,i\in\mathbb N_n\}$, 
and step \ref{alg:step_CPD} computes the discrete CPDs $T_i(\bar X_i|Pa(\bar X_i))$, 
reflecting the dependencies among the variables.  
For any $i\in\mathbb N_n$,
$\Xi_i: Z_i \rightarrow 2^{D_i}$ represents a set-valued map that associates to any point $z_{ij}\in Z_i$ 
the corresponding partition set $D_{ij} \subset D_i$ (this is known as the ``refinement map'').
Furthermore, the abstraction map $\xi_i: D_i \rightarrow Z_i$ associates to any point $s_i \in D_i$   
the corresponding discrete state in $Z_i$. 
Additionally, notice that the absorbing states $\phi = \{\phi_1,\ldots,\phi_n\}$ are added to the definition of BN $\mathfrak B_{\rightarrow}$
so that the conditional probabilities $T_i(\bar X_i|Pa(\bar X_i))$ marginalize to one.   
The function $v(\cdot)$ used in step \ref{alg:step_CPD} acts on (possibly a set of) random variables and provides their instantiation. 
In other words, 
the term $v(Pa(\bar X_i))$ that is present in the conditioned argument of $t_i$ leads to evaluate function $t_i(\bar s_i|\cdot)$ at the instantiated values of $Pa(\bar X_i)$.  
\begin{algorithm}[t]
	\caption{Abstraction of model $\mathscr M_{\mathfrak s}$ as a DBN with $\mathfrak B_{\rightarrow} = (\mathcal V,\mathcal E,\mathcal T)$ over discrete r.v.}
	\label{algo:DBN_app}
	\begin{center}
		\begin{algorithmic}[1]
			\REQUIRE 
			input model $\mathscr M_{\mathfrak s} = (\mathcal S,\mathcal B, T_{\mathfrak s})$, safe set $A$
			\STATE
			Project safe set $A$ in each dimension $D_i \doteq \Pi_i(A),\,i\in\mathbb N_n$
			\STATE
			\label{alg:step_partition}
			Select finite $n_i$-dimensional partition of $D_i$ as $D_i= \cup_{j=1}^{n_i} D_{ij},\, i\in\mathbb N_n$ 
			\STATE
			For each $D_{ij}$, select single representative point $z_{ij} \in D_{ij},\, z_{ij} = \xi_i (D_{ij})$
			\STATE
			Construct the DAG $(\mathcal V,\mathcal E)$, 
			with $\mathcal V = \{X_i,\bar X_i,i\in\mathbb N_n\}$ and $\mathcal E$ as per Section \ref{subsec:cDBN}  
			\STATE
			\label{alg_step_supp}
			Define $Z_i = \{z_{i1},\ldots,z_{in_i}\}$, $i\in\mathbb N_n$, and take $\Omega_i = Z_i \cup \{\phi_i\}$ as the finite state space of two r.v. $X_i$ and $\bar X_i$,
			$\phi_i$ being dummy variables as per Section \ref{subsec:CPD} 
			\STATE
			\label{alg:step_CPD}
			Compute elements of the set $\mathcal T$, namely CPD $T_i$ related to the node $\bar X_i$, $i\in\mathbb N_i$, as
			\begin{equation*}
				T_i(\bar X_i = z|v(Pa(\bar X_i))) 
				=
				\left\{
				\begin{array}{ll}
					\int_{\Xi_i(z)} t_i(\bar s_i| v(Pa(\bar X_i)))d\bar s_i, & z\in Z_i,\,\, v(Pa(\bar X_i))\cap\phi = \emptyset\\
					1-\sum\limits_{z\in Z_i}\int_{\Xi_i(z)} t_i(\bar s_i| v(Pa(\bar X_i)))d\bar s_i, & z=\phi_i, \,\, v(Pa(\bar X_i))\cap\phi = \emptyset\\
					1, & z = \phi_i,\,\, v(Pa(\bar X_i))\cap\phi \neq \emptyset\\
					0, & z\in Z_i,\,\, v(Pa(\bar X_i))\cap\phi \neq \emptyset
				\end{array}
				\right.
			\end{equation*}  
			\ENSURE
			output DBN with $\mathfrak B_{\rightarrow} = (\mathcal V,\mathcal E,\mathcal T)$ over discrete r.v.
		\end{algorithmic}
	\end{center}
\end{algorithm}

The construction of the DBN with discrete r.v.\ in Algorithm \ref{algo:DBN_app} is closely related to the Markov chain abstraction method in \cite{APKL10,SA13}. 
The main difference lies in partitioning in each dimension separately instead of doing it for the whole state space. 
Absorbing states are also assigned to each dimension separately instead of having only one for the unsafe set. 
Moreover, Algorithm \ref{algo:DBN_app} stores the transition probabilities efficiently as a BN.

\subsection{Probabilistic Invariance for the Abstract DBN}
\label{subsec:MC_BN}

We extend the use of $\mathbb P$ by denoting the probability measure on the set of events defined over
a DBN with discrete r.v.\ $\vr z = (X_1,X_2,\ldots,X_n)$.  
Given a discrete set $Z_{\mathfrak a}\subset \prod_i\Omega_i$, 
the probabilistic invariance problem asks to evaluate the probability $p_N(\vr z_0, Z_{\mathfrak a})$ that a finite execution associated
with the initial condition $\vr z(0) = \vr z_0$ remains within the set $Z_{\mathfrak a}$ during the finite time
horizon $t=0,1,2,\ldots,N$.
Formally, 
\begin{equation*}
p_N(\vr z_0,Z_{\mathfrak a}) = \mathbb P(\vr z(t)\in Z_{\mathfrak a}, \text{ for all } t=0,1,2,\ldots,N|\vr z(0) = \vr z_0).
\end{equation*}
This probability can be computed by a discrete analogue of the Bellman backward recursion (see \cite{APLS08} for details).

\begin{theorem}
	\label{thm:safety_dis}
	Consider value functions $V_k^d:\prod_i\Omega_i\rightarrow[0,1]$, $k=0,1,2,\ldots,N$, computed by the backward recursion
	\begin{equation}
	\label{eq:bellman_dis}
	V_k^d(\vr z) = \vr 1_{Z_{\mathfrak a}}(\vr z) \sum_{\bar{\vr z}\in\prod_i\Omega_i} V_{k+1}^d(\bar{\vr z})\mathbb P(\bar{\vr z}|\vr z)
	\quad k=0,1,2,\ldots,N-1,
	\end{equation}
	and initialized with $V_N^d(\vr z) = \vr 1_{Z_{\mathfrak a}}(\vr z)$.
	Then the solution of the invariance problem is characterized as $p_N(\vr z_0,Z_{\mathfrak a}) = V_0^d(\vr z_0)$.
\end{theorem}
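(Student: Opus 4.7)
The plan is to prove Theorem \ref{thm:safety_dis} by backward induction on the time index, in direct analogy with the continuous-state argument leading to \eqref{eq:bellman_rec}. Since the DBN obtained from Algorithm \ref{algo:DBN_app} induces a (time-homogeneous) Markov chain on the finite product space $\prod_i \Omega_i$ with one-step transition probability
\[
\mathbb{P}(\bar{\vr z}\mid \vr z) \;=\; \prod_{i=1}^n T_i\bigl(\bar X_i \,\big|\, v(Pa(\bar X_i))\bigr),
\]
the standard dynamic-programming argument applies, and the only real content is verifying that this is indeed the correct one-step kernel and that the indicator $\vr 1_{Z_{\mathfrak a}}$ correctly truncates unsafe trajectories.

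First I would introduce the auxiliary family of shifted invariance probabilities
\[
W_k(\vr z) \;\doteq\; \mathbb{P}\bigl(\vr z(t)\in Z_{\mathfrak a} \text{ for all } t = k, k+1, \ldots, N \,\big|\, \vr z(k) = \vr z\bigr),
\qquad k = 0, 1, \ldots, N,
\]
so that by definition $p_N(\vr z_0, Z_{\mathfrak a}) = W_0(\vr z_0)$. The goal is then to show $W_k \equiv V_k^d$ for every $k$, by downward induction on $k$.

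For the base case $k = N$, the event in the definition of $W_N(\vr z)$ reduces to $\{\vr z(N) \in Z_{\mathfrak a}\}$ conditioned on $\vr z(N) = \vr z$, which equals $\vr 1_{Z_{\mathfrak a}}(\vr z) = V_N^d(\vr z)$. For the inductive step, assume $W_{k+1} \equiv V_{k+1}^d$. Splitting the invariance event as the intersection of $\{\vr z(k)\in Z_{\mathfrak a}\}$ with the future invariance event from $k+1$ onward, and conditioning on $\vr z(k+1)$ via the law of total probability, yields
\[
W_k(\vr z) \;=\; \vr 1_{Z_{\mathfrak a}}(\vr z)\sum_{\bar{\vr z}\in\prod_i\Omega_i} \mathbb{P}(\vr z(k+1)=\bar{\vr z}\mid \vr z(k)=\vr z)\,\mathbb{P}\bigl(\vr z(t)\in Z_{\mathfrak a},\, t\ge k+1 \,\big|\, \vr z(k+1)=\bar{\vr z},\,\vr z(k)=\vr z\bigr).
\]
The Markov property of the induced chain eliminates the extra conditioning on $\vr z(k)$ from the inner probability, which then equals $W_{k+1}(\bar{\vr z})$; the one-step transition factors through the DBN CPDs as displayed above. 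Applying the inductive hypothesis yields exactly the recursion \eqref{eq:bellman_dis} defining $V_k^d$, hence $W_k \equiv V_k^d$. Evaluating at $k=0$ gives $p_N(\vr z_0, Z_{\mathfrak a}) = V_0^d(\vr z_0)$.

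The main subtlety, rather than an obstacle, is bookkeeping around the absorbing states $\phi_i$ introduced in Algorithm \ref{algo:DBN_app}: one has to check that the CPD cases with $v(Pa(\bar X_i))\cap\phi\neq\emptyset$ make the states $\phi_i$ genuinely absorbing in the induced chain, so that once a trajectory exits the safe region it is permanently routed outside $Z_{\mathfrak a}$ and contributes zero to the invariance probability—consistent with the truncation enforced by $\vr 1_{Z_{\mathfrak a}}$ in \eqref{eq:bellman_dis}. Once this is noted, the Markov property and induction make the argument routine, and the theorem is essentially the finite-state analogue of the Bellman recursion \eqref{eq:bellman_rec} already recalled from \cite{APKL10}.
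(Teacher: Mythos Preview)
Your proof is correct and follows the standard backward-induction/dynamic-programming argument for finite-state probabilistic invariance. The paper itself does not supply a proof of this theorem: it simply states the recursion and refers the reader to \cite{APLS08} for details, so there is no in-paper proof to compare against beyond noting that your argument is exactly the routine one that reference would contain.
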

The discrete transition probabilities $\mathbb P(\bar{\vr z}|\vr z)$ in Equation~\eqref{eq:bellman_dis} are computed by taking the product of the CPD in $\mathcal T$.
More specifically, for any $\vr z,\bar{\vr z}\in\prod_i\Omega_i$ of the form $\vr z = (z_1,z_2,\ldots,z_n),\bar{\vr z} = (\bar z_1,\bar z_2,\ldots,\bar z_n)$ we have
\begin{equation*}
\mathbb P(\bar{\vr z}|\vr z) = \prod_iT_i(\bar X_i = \bar z_i|Pa(\bar X_i) = \vr z).
\end{equation*}


Our algorithm for probabilistic invariance computes $p_N(\vr z_0, Z_{\mathfrak a})$ to approximate $p_N(\vr s_0, A)$, for suitable choices of $\vr z_0$ and $Z_{\mathfrak a}$
depending on $\vr s_0$ and $A$.
The natural choice for the initial state is $\vr z_0 =(z_1(0),\ldots,z_n(0))$ with $z_i(0) =  \xi_i(\Pi_i(\vr s_0))$.
For $A$, the $n$-fold Cartesian product of the collection of the partition sets $\{D_{ij}\},\,i\in\mathbb N_n$ generates a cover of $A$ as 
\begin{align*}
	A& \subset \bigcup\{D_{1j}\}_{j=1}^{n_1}\times \{D_{2j}\}_{j=1}^{n_2}\times\ldots \times\{D_{nj}\}_{j=1}^{n_n}\\
	& = \bigcup_{\vr j}\left\{D_{\vr j}| \vr j = (j_1,j_2,\ldots,j_n), D_{\vr j} \doteq D_{1j_1}\times D_{2j_2}\times\ldots\times D_{nj_n}\right\}.
\end{align*}
We define the safe set $Z_{\mathfrak a}$ of the DBN as 
\begin{equation}
\label{eq:safe_set_dis}
	Z_{\mathfrak a} = \bigcup_{\vr j}\left\{(z_{1j_1},z_{2j_2},\ldots,z_{nj_n}),\text{ such that } A\cap D_{\vr j}\ne\emptyset\text{ for }\vr j = (j_1,j_2,\ldots,j_n)\right\},
\end{equation}
which is a discrete representation of the continuous set $\bar A\subset \mathbb R^n$
\begin{equation}
	\label{eq:mod_safe_set}
	\bar A = \bigcup_{\vr j}\left\{D_{\vr j},\text{ such that } \vr j = (j_1,j_2,\ldots,j_n), A\cap D_{\vr j}\ne\emptyset\right\}.
\end{equation}
For instance $\bar A$ can be a finite union of hypercubes in $\mathbb R^n$ if the partition sets $D_{ij}$ are intervals.
It is clear that the set $\bar A$ is in general different form $A$. 

There are thus two sources of error: first due to replacing $A$ with $\bar A$, and second, due to the abstraction of the dynamics
%
between the discrete outcome obtained by Theorem \ref{thm:safety_dis} and the continuous solution that results from \eqref{eq:bellman_rec}. 
In the next section we provide a quantitative bound on the two sources of error.

\subsection{Quantification of the Error due to Abstraction}
\label{subsec:error}


Let us explicitly write the Bellman recursion \eqref{eq:bellman_rec} of the safety problem over the set $\bar A$:
\begin{equation}
\label{eq:bellman_2}
W_N(\vr s) = \vr 1_{\bar A}(\vr s),\quad
W_k(\vr s) = \int_{\bar A}W_{k+1}(\vr{\bar s})t_{\mathfrak s}(\vr{\bar s}|\vr s)d\vr{\bar s},\quad
k=0,1,2,\ldots,N-1,
\end{equation}
which results in $p_N(\vr s_0,\bar A) = W_0(\vr s_0)$. 
Theorem~\ref{thm:err_set} characterizes the error due to replacing the safe set $A$ by $\bar A$.

\begin{theorem}
	\label{thm:err_set}
	Solution of the probabilistic invariance problem with the time horizon $N$ and two safe sets $A,\bar A$ satisfies the inequality
	\begin{equation*}
	|p_N(\vr s_0,A)-p_N(\vr s_0,\bar A)|\le MN\mathcal L(A\Delta\bar A),
	\quad \forall \vr s_0\in A\cap\bar A,
	\end{equation*}
	where
	$M \doteq \sup\left\{t_{\mathfrak s}(\vr{\bar s}|\vr s)\big|\vr s,\vr{\bar s}\in A\Delta\bar A\right\}$.
	$\mathcal L(B)$ denotes the Lebesgue measure of any set $B\in\mathcal B$
	and
	$A\Delta\bar A \doteq (A\backslash\bar A)\cup(\bar A\backslash A)$
	is the symmetric difference of the two sets $A,\bar A$.
\end{theorem}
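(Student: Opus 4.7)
The plan is to prove, by backward induction on $k$, the pointwise bound
\[
|V_k(\vr s) - W_k(\vr s)| \le (N-k)\, M\, \mathcal{L}(A \Delta \bar A) \quad \text{for all } \vr s \in A \cap \bar A,
\]
from which the theorem follows by setting $k=0$ and using $p_N(\vr s_0,A) = V_0(\vr s_0)$, $p_N(\vr s_0,\bar A) = W_0(\vr s_0)$. The base case $k = N$ is immediate since $V_N(\vr s) = \vr 1_A(\vr s) = 1 = \vr 1_{\bar A}(\vr s) = W_N(\vr s)$ for $\vr s \in A \cap \bar A$.

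For the inductive step, I would start from the Bellman recursions \eqref{eq:bellman_rec} and \eqref{eq:bellman_2} at a point $\vr s \in A \cap \bar A$, rewrite both integrals over the union $A \cup \bar A$ using the indicators of $A$ and $\bar A$, and split the domain into the three disjoint pieces $A \cap \bar A$, $A \setminus \bar A$, and $\bar A \setminus A$. This yields the decomposition
\[
V_k(\vr s) - W_k(\vr s) = \int_{A \cap \bar A}(V_{k+1}-W_{k+1})\, t_{\mathfrak s}(\bar{\vr s}|\vr s)\, d\bar{\vr s} + \int_{A \setminus \bar A} V_{k+1}\, t_{\mathfrak s}\, d\bar{\vr s} - \int_{\bar A \setminus A} W_{k+1}\, t_{\mathfrak s}\, d\bar{\vr s}.
\]
Taking absolute values and using $V_{k+1},W_{k+1} \in [0,1]$ bounds the last two terms by $\int_{A\Delta\bar A} t_{\mathfrak s}(\bar{\vr s}|\vr s)\, d\bar{\vr s}$, which is in turn bounded by $M\,\mathcal{L}(A \Delta \bar A)$ by the definition of $M$.

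For the first term I would invoke the inductive hypothesis on $A \cap \bar A$ to pull out the factor $(N-k-1)\,M\,\mathcal{L}(A\Delta\bar A)$, and then use the crucial property that $t_{\mathfrak s}(\cdot|\vr s)$ is a probability density, so $\int_{A\cap\bar A} t_{\mathfrak s}(\bar{\vr s}|\vr s)\, d\bar{\vr s} \le 1$. Adding the two contributions gives the desired bound $(N-k)\,M\,\mathcal{L}(A\Delta\bar A)$ and closes the induction.

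The one subtle point, which will be the main obstacle to state cleanly, is the scope of the supremum defining $M$: the induction only ever evaluates $t_{\mathfrak s}(\bar{\vr s}|\vr s)$ for $\vr s \in A \cap \bar A$ and $\bar{\vr s} \in A\Delta\bar A$ (the terms arising from the symmetric-difference region), so the bound really only needs $M$ to dominate the density on this product set; I would either read the statement's $\sup$ over $A\Delta\bar A$ as shorthand for "the density at points of $A\Delta\bar A$" or explicitly note that the $\vr s$-argument in the sup may be taken to range over $A\cap\bar A$. The rest of the argument is routine: linearity of the Bellman operator, the triangle inequality, and the fact that $t_{\mathfrak s}(\cdot|\vr s)$ integrates to $1$ make the induction propagate without accumulating extra multiplicative constants, giving exactly the linear-in-$N$ error bound.
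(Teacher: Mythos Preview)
Your proposal is correct and follows essentially the same route as the paper: backward induction on $k$ with the hypothesis $|V_k-W_k|\le (N-k)M\mathcal L(A\Delta\bar A)$ on $A\cap\bar A$, the same three-piece decomposition of the integration domain into $A\cap\bar A$, $A\setminus\bar A$, $\bar A\setminus A$, and the same use of $0\le V_{k+1},W_{k+1}\le 1$ together with $\int t_{\mathfrak s}(\cdot|\vr s)\le 1$. Your remark on the scope of the supremum defining $M$ is well taken; the paper's proof also only ever evaluates $t_{\mathfrak s}(\bar{\vr s}|\vr s)$ with $\vr s\in A\cap\bar A$ and $\bar{\vr s}\in A\Delta\bar A$, so the stated $M$ should be read as dominating the density on that product set.
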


The second contribution to the error is related to the discretization of Algorithm \ref{algo:DBN_app} which is quantified
by posing regularity conditions on the dynamics of the process.
The following Lipschitz continuity assumption restricts the generality of the density functions $t_k$ characterizing
the dynamics of model $\mathscr M_{\mathfrak s}$.
\begin{assumption}
	\label{ass:Lip_cont}
	Assume the density functions $t_k(\bar s_i|\cdot)$ are Lipschitz continuous with the finite positive $d_{ij}$
	\begin{equation*}
	|t_j(\bar s_j|\vr s)-t_j(\bar s_j|\vr{s'})|\le d_{ij}| s_i-s_i'|,
	\end{equation*}
	with $\vr s = [s_1,\ldots,s_{i-1},s_i,s_{i+1},\ldots,s_n]$ and $\vr{s'} = [s_1,\ldots,s_{i-1},s_i',s_{i+1},\ldots,s_n]$,
	for all $s_k,s_k',\bar s_k\in D_k$, $k\in\mathbb N_n$, and for all $i,j\in\mathbb N_n$.
\end{assumption}
Note that Assumption \ref{ass:Lip_cont} holds with $d_{ij} = 0$ if and only if $(X_i,\bar X_j)\notin\mathcal E$ in the DAG of the BN $\mathfrak B_{\rightarrow}$.
Assumption \ref{ass:Lip_cont} enables us to assign non-zero weights to the arcs of the graph and turn it into a weighted DAG.
The non-zero weight $w_{ij} = d_{ij}\mathcal L(D_j)$ is assigned to the arc $(X_i,\bar X_j)\in\mathcal E$, for all $i,j\in\mathbb N_n$.
We define the out-weight of the node $X_i$ by $\mathcal O_i = \sum_{j=1}^{n}w_{ij}$ and the in-weight of the node $\bar X_j$ by $\mathcal I_j = \sum_{i=1}^{n}w_{ij}$.
\begin{remark}
	The above assumption implies Lipschitz continuity of the conditional density functions $t_j(\bar s_j|\vr s)$.
	Since trivially $|s_i - s'_i| \le \|\vr s - \vr s'\|$ for all $i \in \mathbb N_n$, we obtain 
	\begin{equation*}
	|t_j(\bar s_j|\vr s)-t_j(\bar s_j|\vr s')|\le \mathcal H_j \|\vr s-\vr s'\|\quad \forall \vr s,\vr s'\in \bar A,\bar s_j\in D_j,
	\end{equation*}
	where $\mathcal H_j = \sum_{i=1}^{n}d_{ij}$.
	The density function $t_{\mathfrak s}(\vr{\bar s}|\vr s)$ is also Lipschitz continuous if the density functions $t_j(\bar s_j|\vr s)$ are bounded,
	but the boundedness assumption is not necessary for our result to hold.
\end{remark}

Assumption \ref{ass:Lip_cont} enables us to establish Lipschitz continuity of the value functions $W_k$ in \eqref{eq:bellman_2}.
This continuity property is essential in proving an upper bound on the discretization error of Algorithm \ref{algo:DBN_app}, which is presented in Corollary \ref{cor:global_error}.
\begin{lemma}
	\label{lem:cont_val}
	Consider the value functions $W_k(\cdot)$, $k=0,1,2,\ldots,N$, employed in Bellman recursion \eqref{eq:bellman_2} of the safety problem over the set $\bar A$.
	Under Assumption \ref{ass:Lip_cont}, these value functions are Lipschitz continuous
	\begin{equation*}
	|W_k(\vr s)-W_k(\vr s')|\le \kappa \|\vr s-\vr s'\|,\quad\forall\vr s,\vr s'\in \bar A,
	\end{equation*}
	for all $k=0,1,2,\ldots,N$ with the constant $\kappa = \sum_{j=1}^{n}\mathcal I_j$,
	where $\mathcal I_j$ is the in-weight of the node $\bar X_j$ in the DAG of the BN $\mathfrak B_{\rightarrow}$.
\end{lemma}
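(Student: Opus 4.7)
The plan is to proceed by backward induction on $k$, starting from $k = N$ and working down to $k = 0$. The base case is trivial: for any $\vr s, \vr s' \in \bar A$ we have $W_N(\vr s) = W_N(\vr s') = 1$, so $|W_N(\vr s) - W_N(\vr s')| = 0 \le \kappa \|\vr s - \vr s'\|$. A parallel induction also shows $0 \le W_k \le 1$ everywhere, which I will use freely. For the inductive step, after subtracting the two instances of the recursion \eqref{eq:bellman_2} and applying $|W_{k+1}| \le 1$, the problem reduces to bounding
\begin{equation*}
|W_k(\vr s) - W_k(\vr s')| \le \int_{\bar A} \bigl|t_{\mathfrak s}(\bar{\vr s}|\vr s) - t_{\mathfrak s}(\bar{\vr s}|\vr s')\bigr|\,d\bar{\vr s},
\end{equation*}
so the Lipschitz constant becomes independent of the inductive hypothesis and the induction collapses to a single integral estimate.

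The core of the argument is exploiting the product structure \eqref{eq:prod_structure} through the telescoping identity
\begin{equation*}
\prod_{j=1}^n a_j - \prod_{j=1}^n b_j = \sum_{j=1}^n \Bigl(\prod_{k<j} a_k\Bigr)(a_j - b_j)\Bigl(\prod_{k>j} b_k\Bigr),
\end{equation*}
applied with $a_j = t_j(\bar s_j|\vr s)$ and $b_j = t_j(\bar s_j|\vr s')$. After the triangle inequality, I would extend the integration domain from $\bar A$ to the Cartesian product $\prod_k D_k \supseteq \bar A$ (the integrand is non-negative, so this only inflates the bound) and apply Fubini to factorize each of the $n$ summands into a product of one-dimensional integrals. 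Since each $t_k(\cdot|\vr s)$ is a probability density on $\mathbb{R}$, its integral over $D_k$ is at most one, which collapses every factor except the $j$-th, leaving
\begin{equation*}
\int_{\bar A}\bigl|t_{\mathfrak s}(\bar{\vr s}|\vr s)-t_{\mathfrak s}(\bar{\vr s}|\vr s')\bigr|d\bar{\vr s} \le \sum_{j=1}^n \int_{D_j}\bigl|t_j(\bar s_j|\vr s) - t_j(\bar s_j|\vr s')\bigr|d\bar s_j.
\end{equation*}

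To finish, I will invoke the Lipschitz bound $|t_j(\bar s_j|\vr s) - t_j(\bar s_j|\vr s')| \le \mathcal H_j \|\vr s - \vr s'\|$ derived in the Remark from Assumption \ref{ass:Lip_cont} (this itself follows by a coordinate-wise telescoping from $\vr s$ to $\vr s'$, one variable at a time, where each increment contributes $d_{ij}|s_i - s_i'| \le d_{ij}\|\vr s - \vr s'\|$, summing to $\mathcal H_j = \sum_i d_{ij}$). Integrating yields $\mathcal H_j \mathcal L(D_j)\|\vr s - \vr s'\| = \sum_i w_{ij}\|\vr s-\vr s'\| = \mathcal I_j \|\vr s - \vr s'\|$ for the $j$-th summand, and summing over $j$ gives exactly $\kappa\|\vr s - \vr s'\|$ with $\kappa = \sum_j \mathcal I_j$.

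The main obstacle is bookkeeping rather than conceptual: one has to be careful that (i) the telescoping product identity is applied in the correct order so that the ``left'' factors are evaluated at $\vr s$ and the ``right'' at $\vr s'$, (ii) enlarging the integration domain from $\bar A$ to $\prod_k D_k$ is legitimate (the summands remain non-negative), and (iii) the one-dimensional marginal integrals are genuinely bounded by one because $D_k \subset \mathbb{R}$ and $t_k(\cdot|\vr s)$ is a density over $\mathbb{R}$. Once these three points are secured, the Lipschitz constant emerges exactly as the sum of in-weights, with no loss, independent of $k$.
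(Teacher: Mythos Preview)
Your proposal is correct and follows essentially the same route as the paper's proof: both reduce to bounding $\int_{\bar A}|t_{\mathfrak s}(\bar{\vr s}|\vr s)-t_{\mathfrak s}(\bar{\vr s}|\vr s')|\,d\bar{\vr s}$ via the product telescoping identity, enlarge the integration domain to $\prod_k D_k$, collapse all factors but one using $\int_{D_k} t_k \le 1$, and finish with the coordinate-wise Lipschitz bound $\mathcal H_j\|\vr s-\vr s'\|$ to obtain $\sum_j \mathcal H_j\mathcal L(D_j)=\sum_j\mathcal I_j=\kappa$. Your write-up is in fact slightly more explicit than the paper's about the domain enlargement and the Fubini step, and you correctly observe that the inductive hypothesis on $W_{k+1}$ is never actually needed beyond $0\le W_{k+1}\le 1$.
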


\begin{corollary}
	\label{cor:global_error}
	The following inequality holds under Assumption \ref{ass:Lip_cont}:
	\begin{equation*}
		|p_N(\vr s_0, A)-p_N(\vr z_0, Z_{\mathfrak a})|\le MN\mathcal L(A\Delta\bar A) + N\kappa\delta \quad\forall \vr s_0\in A,
	\end{equation*}
	where $p_N(\vr z_0, Z_{\mathfrak a})$ is the invariance probability for the DBN obtained by Algorithm \ref{algo:DBN_app}. The initial state of the DBN is $\vr z_0 =(z_1(0),\ldots,z_n(0))$ with $z_i(0) =  \xi_i(\Pi_i(\vr s_0))$. The set $Z_{\mathfrak a}$ and the constant $M$ are defined in \eqref{eq:safe_set_dis} and Theorem \ref{thm:err_set}, respectively. The diameter of the partition of Algorithm \ref{algo:DBN_app} is defined and used as
	$\delta = \sup\{\|\vr s-\vr s'\|,\forall \vr s,\vr s'\in  D_{\vr j},\forall \vr j\,\,\,D_{\vr j}\subset \bar A\}.$
\end{corollary}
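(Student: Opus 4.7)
The plan is to decompose the overall error by triangle inequality into the two contributions flagged at the end of Section \ref{subsec:MC_BN}: the error incurred by replacing the set $A$ with its covering $\bar A$, and the discretization error from the DBN abstraction. Concretely, I would write
\begin{equation*}
|p_N(\vr s_0, A)-p_N(\vr z_0, Z_{\mathfrak a})| \le |p_N(\vr s_0, A)-p_N(\vr s_0, \bar A)| + |p_N(\vr s_0, \bar A)-p_N(\vr z_0, Z_{\mathfrak a})|.
\end{equation*}
Since $A\subseteq \bar A$ by construction in \eqref{eq:mod_safe_set}, any $\vr s_0\in A$ also lies in $A\cap\bar A$, so Theorem \ref{thm:err_set} directly bounds the first term by $MN\mathcal L(A\Delta\bar A)$.

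For the discretization term I would use the dynamic-programming characterizations $p_N(\vr s_0,\bar A)=W_0(\vr s_0)$ from \eqref{eq:bellman_2} and $p_N(\vr z_0,Z_{\mathfrak a}) = V_0^d(\vr z_0)$ from Theorem \ref{thm:safety_dis}. The key observation is that by the product structure \eqref{eq:prod_structure} together with Algorithm \ref{algo:DBN_app}, the discrete transition probability collapses to an integral $\mathbb P(\bar{\vr z}|\vr z) = \int_{D_{\vr j}} t_{\mathfrak s}(\bar{\vr s}|\vr z)\,d\bar{\vr s}$, where $D_{\vr j}$ is the partition cell whose representative is $\bar{\vr z}$. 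Lifting $V_k^d$ to a piecewise-constant function on $\bar A$ by $\tilde V_k(\vr s) := V_k^d(\xi(\vr s))$, where $\xi(\vr s)=(\xi_1(\Pi_1(\vr s)),\ldots,\xi_n(\Pi_n(\vr s)))$ returns the representative of the cell containing $\vr s$, then yields
\begin{equation*}
V_k^d(\vr z_{\vr j}) = \int_{\bar A} \tilde V_{k+1}(\bar{\vr s})\,t_{\mathfrak s}(\bar{\vr s}|\vr z_{\vr j})\,d\bar{\vr s},
\end{equation*}
which matches the continuous Bellman recursion \eqref{eq:bellman_2} for $W_k$ except that the next-step function is replaced by its piecewise-constant proxy.

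The induction step then proceeds by setting $e_k := \sup_{\vr s\in\bar A}|W_k(\vr s) - \tilde V_k(\vr s)|$ and showing $e_k\le \kappa\delta + e_{k+1}$ with base case $e_N=0$. For $\vr s\in D_{\vr j}$ I would split
\begin{equation*}
W_k(\vr s) - \tilde V_k(\vr s) = \bigl[W_k(\vr s)-W_k(\vr z_{\vr j})\bigr] + \bigl[W_k(\vr z_{\vr j}) - V_k^d(\vr z_{\vr j})\bigr],
\end{equation*}
where Lemma \ref{lem:cont_val} bounds the first bracket by $\kappa\|\vr s-\vr z_{\vr j}\|\le \kappa\delta$, and the integral representations above bound the second bracket by $e_{k+1}\int_{\bar A}t_{\mathfrak s}(\bar{\vr s}|\vr z_{\vr j})\,d\bar{\vr s}\le e_{k+1}$. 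Unrolling gives $e_0\le N\kappa\delta$, and since $\vr z_0 = \xi(\vr s_0)$ by the choice stated in the corollary, this yields $|p_N(\vr s_0,\bar A)- p_N(\vr z_0,Z_{\mathfrak a})|\le N\kappa\delta$, which together with the first bound finishes the argument.

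The main technical subtlety, and the step I expect to require the most care, is verifying the identity $\mathbb P(\bar{\vr z}|\vr z) = \int_{D_{\vr j}} t_{\mathfrak s}(\bar{\vr s}|\vr z)\,d\bar{\vr s}$ and reconciling the treatment of the absorbing states $\phi_i$: the contributions of the discrete successors containing any $\phi_i$ must be shown to produce zero mass in $\sum_{\bar{\vr z}} V_{k+1}^d(\bar{\vr z})\mathbb P(\bar{\vr z}|\vr z)$ (because $V_{k+1}^d$ vanishes outside $Z_{\mathfrak a}$), so that the discrete sum collapses exactly to the integral of $\tilde V_{k+1}\,t_{\mathfrak s}$ over $\bar A$. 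Once this correspondence between the Bellman operators is established, the remainder is the standard dynamic-programming error propagation driven by the Lipschitz constant $\kappa$ from Lemma \ref{lem:cont_val}.
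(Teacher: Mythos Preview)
Your proposal is correct and follows exactly the route the paper intends: the statement is presented as a corollary of Theorem~\ref{thm:err_set} and Lemma~\ref{lem:cont_val}, and the paper does not spell out a separate proof. Your triangle-inequality split into the set-replacement error (handled by Theorem~\ref{thm:err_set}) and the discretization error (handled by an induction driven by the Lipschitz bound of Lemma~\ref{lem:cont_val}) is precisely the argument the corollary is meant to summarize, and the details you supply---the Fubini identity $\mathbb P(\bar{\vr z}\mid\vr z)=\int_{D_{\vr j}}t_{\mathfrak s}(\bar{\vr s}\mid\vr z)\,d\bar{\vr s}$ from the product structure \eqref{eq:prod_structure} and Algorithm~\ref{algo:DBN_app}, and the vanishing of $V_{k+1}^d$ on states containing absorbing coordinates---are the right ones.
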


The second error term in Corollary \ref{cor:global_error} is a linear function of the partition diameter $\delta$, which depends on all partition sets along different dimensions. We are interested in proving a dimension-dependent error bound in order to parallelize the whole abstraction procedure along different dimensions. The next theorem gives this dimension-dependent error bound.
\begin{theorem}
	\label{thm:err_dis}
	The following inequality holds under Assumption \ref{ass:Lip_cont}:
	\begin{equation}
	|p_N(\vr s_0, A)-p_N(\vr z_0, Z_{\mathfrak a})|\le MN\mathcal L(A\Delta\bar A) + N\sum_{i=1}^{n} \mathcal O_i\delta_i \quad\forall \vr s_0\in A,
	\end{equation}
	with the constants defined in Corollary \ref{cor:global_error}.
	$\mathcal O_j$ is the out-weight of the node $X_i$ in the DAG of the BN $\mathfrak B_{\rightarrow}$.
	The quantity $\delta_i$ is the maximum diameter of the partition sets along the $i^{th}$ dimension
	$
	\delta_i = \sup\{|s_i-s_i'|,\forall s_i,s_i'\in  D_{ij},\forall j\in\mathbb N_{n_i}\}.
	$
\end{theorem}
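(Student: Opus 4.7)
The plan is to refine the argument behind Corollary \ref{cor:global_error} so that the dependence on the global partition diameter $\delta$ is replaced by a sum of dimension-local diameters $\delta_i$ weighted by the out-weights $\mathcal O_i$. Proceeding as in the Corollary, I would first split the total error by the triangle inequality as
\begin{equation*}
|p_N(\vr s_0,A)-p_N(\vr z_0,Z_{\mathfrak a})| \le |p_N(\vr s_0,A)-p_N(\vr s_0,\bar A)| + |p_N(\vr s_0,\bar A)-p_N(\vr z_0,Z_{\mathfrak a})|.
\end{equation*}
Theorem \ref{thm:err_set} dispatches the first term with the bound $MN\mathcal L(A\Delta\bar A)$, so the entire task is to sharpen the bound on the second, purely discretization-induced, term.

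The crucial technical ingredient I would establish is an integrated coordinate-wise Lipschitz inequality on the joint density, exploiting the factorization \eqref{eq:prod_structure}. Using the telescoping identity $\prod_j a_j - \prod_j b_j = \sum_j (a_j-b_j)\prod_{k<j}b_k\prod_{k>j}a_k$ and the fact that each $t_k(\cdot|\vr s)$ integrates to at most $1$ over $D_k$, I would show
\begin{equation*}
\int_{\bar A}\bigl|t_{\mathfrak s}(\bar{\vr s}|\vr s)-t_{\mathfrak s}(\bar{\vr s}|\vr s')\bigr|\,d\bar{\vr s} \;\le\; \sum_{i=1}^{n} \mathcal O_i\,|s_i-s_i'|,\qquad \forall\,\vr s,\vr s'\in\bar A.
\end{equation*}
The derivation proceeds in two stages: first for $\vr s,\vr s'$ differing only in coordinate $i$, where the telescoping collapses the integrand into one one-dimensional integral $\int_{D_j}|t_j(\bar s_j|\vr s)-t_j(\bar s_j|\vr s')|d\bar s_j$ per $j$, bounded via Assumption \ref{ass:Lip_cont} by $d_{ij}\mathcal L(D_j)|s_i-s_i'|=w_{ij}|s_i-s_i'|$ and summed to $\mathcal O_i|s_i-s_i'|$; second, chaining $n$ such one-coordinate steps to cover arbitrary $\vr s,\vr s'$.

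With this inequality in hand, I would compare the continuous recursion $W_k$ in \eqref{eq:bellman_2} with the piecewise-constant extension $\tilde V_k(\vr s)\doteq V_k^d(\xi_1(s_1),\ldots,\xi_n(s_n))$ of the DBN value function. Substituting the CPD formula for $T_i$ from Algorithm \ref{algo:DBN_app} into \eqref{eq:bellman_dis} and invoking the product structure \eqref{eq:prod_structure}, the DBN recursion can be rewritten as $\tilde V_k(\vr s) = \int_{\bar A}\tilde V_{k+1}(\bar{\vr s})\,t_{\mathfrak s}(\bar{\vr s}|\xi(\vr s))\,d\bar{\vr s}$. Subtracting this identity from \eqref{eq:bellman_2} and applying the triangle inequality splits the one-step error into a propagation term bounded by $\sup_{\vr s\in\bar A}|W_{k+1}(\vr s)-\tilde V_{k+1}(\vr s)|$ and a new term $\int_{\bar A}\tilde V_{k+1}(\bar{\vr s})\bigl|t_{\mathfrak s}(\bar{\vr s}|\vr s)-t_{\mathfrak s}(\bar{\vr s}|\xi(\vr s))\bigr|d\bar{\vr s}$. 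Since $0\le\tilde V_{k+1}\le 1$ and $|s_i-\xi_i(s_i)|\le\delta_i$ by definition of $\delta_i$, the integrated Lipschitz inequality bounds the new term by $\sum_i\mathcal O_i\delta_i$. The base case $W_N=\tilde V_N=\vr 1_{\bar A}$ holds by construction of $Z_{\mathfrak a}$ in \eqref{eq:safe_set_dis}, and a backward induction over $k=N,\ldots,0$ accumulates at most $N\sum_i\mathcal O_i\delta_i$ of discretization error, yielding the claimed bound upon combining with the $A$-vs-$\bar A$ contribution.

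The main obstacle I expect is the integrated coordinate-wise Lipschitz step: one must apply the telescoping identity \emph{inside} the integral so that exactly one factor carries the Lipschitz weight $d_{ij}$ from Assumption \ref{ass:Lip_cont} while the remaining $n-1$ factors integrate to at most $1$ against the full density; the finer out-weight $\mathcal O_i$, rather than the uniform constant $\kappa=\sum_j\mathcal I_j$ used in Lemma \ref{lem:cont_val}, is precisely what this bookkeeping produces. Once that step is in place, the remainder of the induction is routine and the dimension-dependent bound follows directly.
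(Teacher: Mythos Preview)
The paper's appendix does not contain a self-contained proof of this theorem; only Theorem~\ref{thm:err_set}, Lemma~\ref{lem:cont_val}, and Lemma~\ref{lem:norm_mat} are proved there. Your plan is correct and is exactly the argument one would supply: it reuses the telescoping-inside-the-integral computation already displayed in the proof of Lemma~\ref{lem:cont_val}, but instead of coarsening $|t_j(\bar s_j\mid\vr s)-t_j(\bar s_j\mid\vr s')|$ to $\mathcal H_j\|\vr s-\vr s'\|$ at the last step, one retains the coordinate-wise bound $\sum_i d_{ij}|s_i-s_i'|$ from Assumption~\ref{ass:Lip_cont} and swaps the order of summation, obtaining $\sum_j\mathcal L(D_j)\sum_i d_{ij}|s_i-s_i'|=\sum_i\mathcal O_i|s_i-s_i'|$. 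Your two-stage description (first fix a single coordinate, then chain) is equivalent to doing the telescoping over $j$ once for arbitrary $\vr s,\vr s'$ and then applying the triangle inequality inside each $t_j$; either ordering works. The backward induction comparing $W_k$ with the piecewise-constant lift $\tilde V_k$ of $V_k^d$, and the identification of the discrete recursion as an integral against $t_{\mathfrak s}(\cdot\mid\xi(\vr s))$ over $\bar A$, are handled correctly.
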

For a given error threshold $\epsilon$, we can select the set $\bar A$ and consequently the diameters $\delta_i$ such that
$MN\mathcal L(A\Delta\bar A) + N\sum_{i=1}^{n} \mathcal O_i\delta_i\le \epsilon$.
Therefore, generation of the abstract DBN, namely selection of the partition sets $\{D_{ij},\,j\in\mathbb N_i\}$ (according to the diameter $\delta_i$) and computation of the CPD, can be implemented in parallel. 
For a given $\epsilon$ and set $\bar A$, the cardinality of the state space $\Omega_i, i\in\mathbb N_n,$ of the discrete random variable $X_i$ and thus the size of the CPD $T_i$, 
grow linearly as a function of the horizon of the specification $N.$ 

\section{Efficient Model Checking of the Finite-State DBN}
\label{sec:eff_comp}

Existing numerical methods for model checking DBNs with discrete r.v. transform the DBN into an explicit matrix representation \cite{JCLLZ09,L09,PT12},
which defeats the purpose of a compact representation. 
%
Instead, we show that the multiplicative structure of the transition probability matrix can be incorporated in the computation which makes 
the construction of $\mathbb P(\bar{\vr z}|\vr z)$ dispensable. 
For this purpose we employ \emph{factor graphs} and the \emph{sum-product algorithm} \cite{KFL01} originally developed for marginalizing functions and applied to belief propagation in Bayesian networks. 
Suppose that a \emph{global} function is given as a product of \emph{local} functions, and that each local function depends on a subset of the variables of the global map.  
In its most general form, the sum-product algorithm acts on factor graphs in order to marginalize the global function, i.e., taking summation respect to a subset of variables, 
exploiting its product structure \cite{KFL01}. 
In our problem, 
we restrict the summation domain of the Bellman recursion \eqref{eq:bellman_dis} to $\prod_i Z_i$ because the value functions are simply equal to zero in the complement of this set. 
The summand in \eqref{eq:bellman_dis} has the multiplicative structure 
\begin{equation}
\label{eq:fun_g}
g(\vr z,\bar{\vr z}) \doteq \vr 1_{Z_{\mathfrak a}}(\vr z)V_{k+1}^d(\bar{\vr z})\prod_i T_i(\bar X_i = \bar z_i|Pa(\bar X_i) = \vr z),
\quad V_k^d(\vr z) = \sum_{\bar{\vr z}\in \prod_i Z_i}g(\vr z,\bar{\vr z}).
\end{equation}
The function $g(\vr z,\bar{\vr z})$ depends on variables $\{z_i,\bar z_i,\,i\in\mathbb N_n\}$.
The factor graph of $g(\vr z,\bar{\vr z})$ has $2n$ \emph{variable nodes}, one for each variable and $(n+2)$ \emph{function nodes} for local functions $\vr 1_{Z_{\mathfrak a}},V_{k+1}^d,T_i$. An arc connects a variable node to a function node if and only if the variable is an argument of the local function.
The factor graph of Example \ref{exm:linear_system} for $n = 4$ is presented in Figure \ref{fig:factor_graph} -- 
factor graphs of general functions $g(\vr z,\bar{\vr z})$ in \eqref{eq:fun_g} are similar to that in Figure \ref{fig:factor_graph}, 
the only part needing to be modified being the set of arcs connecting variable nodes $\{z_i,\,i\in\mathbb N_n\}$ and function nodes $\{T_i,\,i\in\mathbb N_n\}$. 
This part of the graph can be obtained from the DAG of $\mathfrak B_{\rightarrow}$ of the DBN.

\begin{figure}
	\centering
	\begin{minipage}{0.38\textwidth}
		\includegraphics[width = \textwidth, height=0.2\textheight]{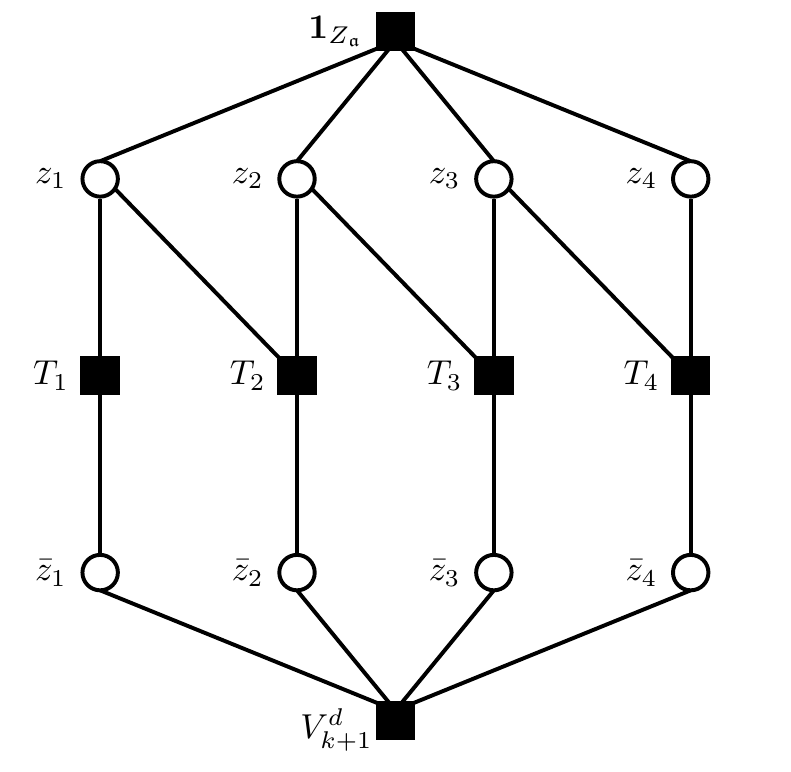}
		\caption{Factor graph of the linear stochastic system \eqref{eq:exmp_linear} for $n=4$.}
		\label{fig:factor_graph}
	\end{minipage}
	\quad
	\begin{minipage}{0.57\textwidth}
		\includegraphics[width = \textwidth]{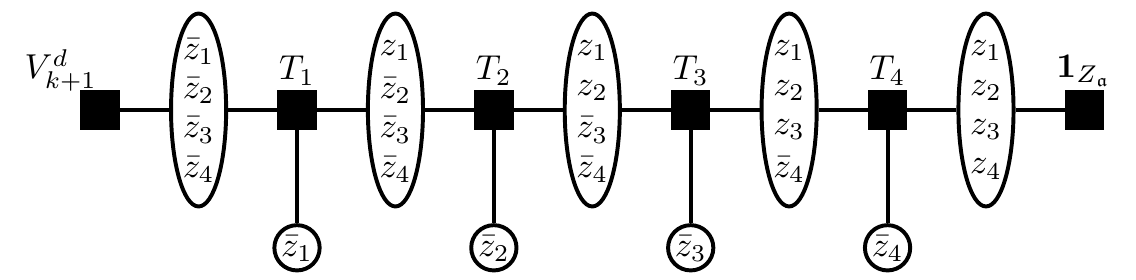}
		\includegraphics[width = 0.9\textwidth]{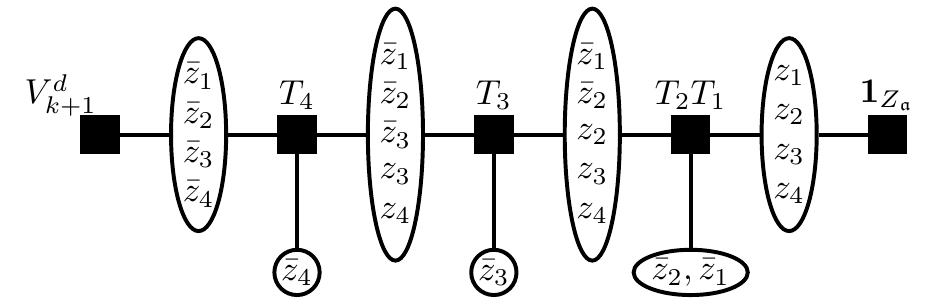}	
		\caption{Spanning tree of the linear stochastic system in \eqref{eq:exmp_linear} for $n=4$ and two orderings $(\bar z_4,\bar z_3,\bar z_2,\bar z_1)$ (top plot) and $(\bar z_1,\bar z_2,\bar z_3,\bar z_4)$ (bottom plot).}
		\label{fig:spanning_tree}
	\end{minipage}
\end{figure}

The factor graph of a function $g(\vr z,\bar{\vr z})$ contains loops for $n\ge 2$ and must be transformed to a \emph{spanning tree} using clustering and stretching transformations \cite{KFL01}. 
For this purpose the order of clustering function nodes $\{T_i,i\in\mathbb N_n\}$ and that of stretching variable nodes $\{z_i,\, i\in\mathbb N_n\}$ needs to be chosen.
Figure \ref{fig:spanning_tree} presents the spanning trees of the stochastic system in \eqref{eq:exmp_linear} for two such orderings.
The variable nodes at the bottom of each spanning tree specify the order of the summation, 
whereas the function nodes considered from the left to the right indicate the order of multiplication of the local functions. 
The rest of the variable nodes show the arguments of the intermediate functions, 
which reflects the required memory for storing such functions. 
The computational complexity of the solution carried out on the spanning tree clearly depends on this ordering.

Algorithm \ref{algo:DBN_ordering} presents a greedy procedure that operates on the factor graph and provides an ordering of the variables and of the functions, in order to reduce the overall memory usage. 
This algorithm iteratively combines the function nodes and selects the next variable node, over which the summation is carried out.
The output of this algorithm implemented on the factor graph of Example \ref{exm:linear_system} is the orderings $\kappa_{\mathfrak f} = (\bar z_4,\bar z_3,\bar z_2,\bar z_1)$ and $e_{\mathfrak f} = (T_4,T_3,T_2,T_1)$, started from the outermost sum, which is related to the spanning tree on top of Figure \ref{fig:spanning_tree}.
\begin{algorithm}[t]
	\caption{Greedy algorithm for obtaining the order of stretching variables and clustering functions in the factor graph}
	\label{algo:DBN_ordering}
	\begin{center}
		\begin{algorithmic}[1]
			\REQUIRE 
			Factor graph of the summand in Bellman recursion
			\STATE
			Initialize the sets $\mathcal U_1 = \{z_i,\,i\in\mathbb N_n\}$,
			$\mathcal U_2 = \{\bar z_i,\,i\in\mathbb N_n\}$,
			$\mathcal U_3 = \{T_i,\,i\in\mathbb N_n\}$,
			$e_{\mathfrak f} = \kappa_{\mathfrak f} = \emptyset$
			\WHILE{$\mathcal U_1\neq\emptyset$}
			\STATE
			For any node $u\in\mathcal U_3$ compute $Pa_{\mathfrak f}(u)$ (resp. $Ch_{\mathfrak f}(u)$) as the elements of $\mathcal U_1$ (resp. $\mathcal U_2$) connected to $u$ by an arc in the factor graph
			\STATE
			Define the equivalence relation $R$ on $\mathcal U_3$ as $uR\bar u$ iff $Pa_{\mathfrak f}(u) = Pa_{\mathfrak f}(\bar u)$
			\STATE
			Replace the set $\mathcal U_3$ with the set of equivalence classes induced by $R$.
			\STATE
			Combine all the variable nodes of $Ch_{\mathfrak f}(u)$ connected to one class
			\STATE
			Select $u\in\mathcal U_3$ with the minimum cardinality of $Pa_{\mathfrak f}(u)$ and put $e_{\mathfrak f} = (u,e_{\mathfrak f}),\kappa_{\mathfrak f} = (Ch_{\mathfrak f}(u),\kappa_{\mathfrak f})$\STATE
			Update the sets $\mathcal U_1 = \mathcal U_1\backslash Pa_{\mathfrak f}(u)$,
			$\mathcal U_2 = \mathcal U_2\cup Pa_{\mathfrak f}(u)\backslash Ch_{\mathfrak f}(u)$,
			$\mathcal U_3 = \mathcal U_3\backslash\{u\}$, and eliminate all the arcs connected to $u$
			\ENDWHILE
			\ENSURE
			The order of variables $\kappa_{\mathfrak f}$ and functions $e_{\mathfrak f}$
		\end{algorithmic}
	\end{center}
\end{algorithm}


\section{Comparison with the State of the Art}
\label{sec:case_study}

In this section we compare our approach with the state-of-the-art abstraction procedure presented in \cite{APKL10} (referred to as $\mathrm{AKLP}$ in the following), 
which does not exploit the structure of the dynamics. 
The $\mathrm{AKLP}$ algorithm approximates the concrete model with a finite-state Markov chain by uniformly gridding the safe set.
As in our work, the error bound of the $\mathrm{AKLP}$ procedure depends on the global Lipschitz constant of the density function of the model, 
however it does not exploit its structure as proposed in this work. 
We compare the two procedures on (1) error bounds and (2) computational resources.

Consider the stochastic linear dynamical model in \eqref{eq:exmp_linear}, 
where $\Phi = [a_{ij}]_{i,j}$ is an arbitrary matrix.
The Lipschitz constants $d_{ij}$ in Assumption \ref{ass:Lip_cont} can be computed as 
$d_{ij} = |a_{ji}|/\sigma_j^2\sqrt{2\pi e}$, where $e$ is Euler's constant.
From Theorem \ref{thm:err_dis}, we get the following error bound:
\begin{equation*}
	e_{\mathrm{DBN}} \doteq MN\mathcal L(A\Delta\bar A)+\frac{N}{\sqrt{2\pi e}}\sum_{i,j=1}^n\frac{|a_{ji}|}{\sigma_j^2}\mathcal L(D_j)\delta_i.
\end{equation*} 
On the other hand, the error bound for $\mathrm{AKLP}$ is
\begin{equation*}
	e_{\mathrm{AKLP}} = MN\mathcal L(A\Delta\bar A)
	+\frac{Ne^{-1/2}}{(\sqrt{2\pi})^n\sigma_1\sigma_2\ldots\sigma_n}\|\Sigma^{-1/2}\Phi\|_2\delta\mathcal L(A).
\end{equation*}
In order to meaningfully compare the two error bounds, 
select set $A = [-\alpha,\alpha]^n$ and $\sigma_i = \sigma,i\in\mathbb N_n$,
and consider hypercubes as partition sets. The two error terms then become
\begin{equation*}
	e_{\mathrm{DBN}} = \varsigma n\eta \left(\frac{\|\Phi\|_1}{n\sqrt{n}}\right),
	\quad e_{\mathrm{AKLP}} = \varsigma\eta^n\|\Phi\|_2,
	\quad \eta = \frac{2\alpha}{\sigma\sqrt{2\pi}},
	\quad \varsigma = \frac{N\delta}{\sigma\sqrt{e}},
\end{equation*}
where $\|\Phi\|_1$ and $\|\Phi\|_2$ are the entry-wise one-norm and the induced two-norm of matrix $\Phi$, respectively.
The error $e_{\mathrm{AKLP}}$ depends exponentially on the dimension $n$ as $\eta^n$, 
whereas we have reduced this term to a linear one $(n\eta)$ in our proposed new approach resulting in error $e_{\mathrm{DBN}}$. 
Note that $\eta\le 1$ means that the standard deviation of the process noise is larger than the selected safe set:  
in this case the value functions (which characterize the probabilistic invariance problem) uniformly converge to zero with rate $\eta^n$; 
clearly the case of $\eta>1$ is more interesting. 
On the other hand for any matrix $\Phi$ we have $\frac{\|\Phi\|_1}{n\sqrt{n}}\le\|\Phi\|_2$. 
This second term indicates how sparsity is reflected in the error computation. 
Denote by $r$ the degree of connectivity of the DAG of $\mathfrak B_{\rightarrow}$ for this linear system,
which is the maximum number of non-zero elements in rows of matrix $\Phi$.
We apply Lemma \ref{lem:norm_mat} in the Appendix to matrix $\Phi$ to get the inequalities
\begin{equation*}
	\|\Phi\|_2\le \sqrt{nr}\max_{i,j}|a_{ij}|, 
	\qquad
	\frac{\|\Phi\|_1}{n\sqrt{n}}\le \frac{r}{\sqrt{n}}\max_{i,j}|a_{ij}|, 
\end{equation*}
which shows that for a fixed dimension $n$, sparse dynamics, compared to fully connected dynamics, 
results in better error bounds in the new approach. 

In order to compare computational resources, consider the numerical values $N = 10$, $\alpha = 1$, $\sigma = 0.2$, and the error threshold $\epsilon = 0.2$ for the lower bidiagonal matrix $\Phi$ with all the non-zero entries set to one. 
Table~\ref{tab:numerics} compares the number of required partition sets (or bins) per dimension, the number of marginals, and the required number of (addition and multiplication) operations for the verification step, for models of different dimensions (number of continuous variables $n$). 
The numerical values in Table~\ref{tab:numerics} confirm that for a given upper bound on the error $\epsilon$, the number of bins per dimension and the required marginals grow exponentially in dimension for $\mathrm{AKLP}$ and polynomially for our DBN-based approach.
For instance, to ensure the error is at most $\epsilon$ for the model of dimension $n=4$, 
the cardinality of the partition of each dimension for the uniform gridding and for the structured approach is $2.9\times 10^5$ and $8.5\times 10^3$, respectively. 
Then, $\mathrm{AKLP}$ requires storing $4.8\times 10^{43}$ entries (which is infeasible!), 
whereas the DBN approach requires $1.8\times 10^{12}$ entries ($\sim 8$GB). 
The number of
operations required for computation of the safety probability are $1.1\times 10^{45}$ and $3.5\times 10^{21}$, respectively.
This shows a substantial reduction in memory usage and computational time effort: 
with given memory and computational resources, the DBN-based approach in compare with $\mathrm{AKLP}$ promises to handle systems with dimension that is at least twice as large. 

\begin{table}
\centering
\caption{Comparison of the $\mathrm{AKLP}$ and the DBN-based algorithms, over the stochastic linear dynamical model in \eqref{eq:exmp_linear}.
The number of partition sets (or bins) per dimension, the number of marginals, and the total required number of (addition and multiplication) operations for the verification step, 
are compared for models of different dimensions (number of continuous variables $n$).} 
\resizebox{\columnwidth}{!}{ %
\begin{tabular}{|@{\hskip1pt}l@{\hskip1pt}|@{\hskip1pt}c@{\hskip1pt}|@{\hskip1pt}c@{\hskip1pt}|@{\hskip1pt}c@{\hskip1pt}|@{\hskip1pt}c@{\hskip1pt}|@{\hskip1pt}c@{\hskip1pt}|@{\hskip1pt}c@{\hskip1pt}|@{\hskip1pt}c@{\hskip1pt}|@{\hskip1pt}c@{\hskip1pt}|@{\hskip1pt}c@{\hskip1pt}|}
\hline
\multicolumn{2}{|c|}{dimension $n$}
 & 1 & 2 & 3 & 4 & 5 & 6 & 7 & 8\\
\hline
\multirow{2}{*}{ \# bins/dim}
&$\mathrm{AKLP}$ & $1.2\times 10^3$ & $1.1\times 10^4$ & $6.0\times 10^4$ & $2.9\times 10^5$ & $1.3\times 10^6$ & $5.8\times 10^6$ & $2.5\times 10^7$ & $1.1\times 10^8$\\
& $\mathrm{DBN}$ & $1.2\times 10^3$ & $3.6\times 10^3$ & $6.0\times 10^3$ & $8.5\times 10^3$ & $1.1\times 10^4$ & $1.3\times 10^4$ & $1.6\times 10^4$ & $1.8\times 10^4$\\
\hline
\multirow{2}{*}{ \# marginals}
&$\mathrm{AKLP}$ & $1.5\times 10^6$ & $1.5\times 10^{16}$ & $4.8\times 10^{28}$ & $4.8\times 10^{43}$ & $1.5\times 10^{61}$ & $1.5\times 10^{81}$ & $4.3\times 10^{103}$ & $3.5\times 10^{128}$ \\
& $\mathrm{DBN}$ & $1.5\times 10^6$ & $4.8\times 10^{10}$ & $4.4\times 10^{11}$ & $1.8\times 10^{12}$ & $5.2\times 10^{12}$ & $1.2\times 10^{13}$ & $2.3\times 10^{13}$ & $4.2\times 10^{13}$\\
\hline
\multirow{2}{*}{ \# operations}
&$\mathrm{AKLP}$ & $2.9\times 10^7$ & $3.1\times 10^{17}$ & $1.0\times 10^{30}$ & $1.1\times 10^{45}$ & $3.7\times 10^{62}$ & $3.7\times 10^{82}$ & $1.1\times 10^{105}$ & $9.5\times 10^{129}$ \\
& $\mathrm{DBN}$ & $2.9\times 10^7$ & $1.9\times 10^{12}$ & $8.0\times 10^{16}$ & $3.5\times 10^{21}$ & $1.7\times 10^{26}$ & $8.9\times 10^{30}$ & $5.2\times 10^{35}$ & $3.4\times 10^{40}$\\
\hline
\end{tabular}
}
\label{tab:numerics}
\end{table}

\section{Conclusions and Future Directions}
\label{sec:concl}

While we have focused on probabilistic invariance, our abstraction approach
can be extended to more general properties expressed within the bounded-horizon fragment of PCTL \cite{rcsl2010} or 
to bounded-horizon linear temporal properties \cite{TA13,tmka2013}, since the model checking problem for these
logics reduce to computations of value functions similar to the Bellman recursion scheme.
%
Our focus in this paper has been the foundations of DBN-based abstraction for general Markov processes: factored representations, error bounds, and algorithms.
We are currently implementing these algorithms in the \software tool \cite{FAUST15},
and scaling the algorithms using dimension-dependent adaptive gridding \cite{SA13} as well as implementations of the sum-product
algorithm on top of data structures such as algebraic decision diagrams (as in probabilistic model checkers \cite{KNP11}).


%

\bibliographystyle{plain}
\bibliography{biblio}

\appendix
\section{Proof of Statements}
\label{sec:proofs}

\begin{proof}[Proof of Theorem \ref{thm:err_set}]
	Recall the recursive equations for the probabilistic safety problem over sets $A$ and $\bar A$ as in \eqref{eq:bellman_rec} and \eqref{eq:bellman_2}, respectively.
	The solutions of the safety problems are $p_N(\vr s_0, A) = V_0(\vr s_0)$ and $p_N(\vr s_0,\bar A) = W_0(\vr s_0)$.
	We prove inductively that the inequality
	$|V_{k}(\vr s)- W_{k}(\vr s)|\le M(N-k)\mathcal L(\bar A\Delta A)$
	holds for all $\vr s\in A\cap\bar A$. This inequality is true for $k = N$.
	For any $k=0,1,2,\ldots,N-1$ and any state $\vr s\in A\cap\bar A$ we have
	\begin{align*}
		|V_{k}(\vr s)- W_{k}(\vr s)|& \le \int_{A\cap\bar A}|V_{k+1}(\vr{\bar s})- W_{k+1}(\vr{\bar s})|t_{\mathfrak s}(\vr{\bar s}|\vr s)d\vr{\bar s}\\
		& + \int_{A\backslash \bar A}V_{k+1}(\vr{\bar s})t_{\mathfrak s}(\vr{\bar s}|\vr s)d\vr{\bar s}
		+ \int_{\bar A\backslash A}W_{k+1}(\vr{\bar s})t_{\mathfrak s}(\vr{\bar s}|\vr s)d\vr{\bar s}\\
		& \le M(N-k-1)\mathcal L(\bar A\Delta A) + M\mathcal L(\bar A\backslash A)+M\mathcal L(A\backslash\bar A)\\
		& = M(N-k)\mathcal L(\bar A\Delta A).
	\end{align*}
	The inequality for $k = 0$ proves upper bound $MN\mathcal L(\bar A\Delta A)$ on $|p_N(\vr s_0, A)-p_N(\vr s_0,\bar A)|$.
\end{proof}

\begin{proof}[Proof of Lemma \ref{lem:cont_val}]
The inequality holds for $k=N$. For $k=0,1,2,\ldots,N-1$ and any $\vr s,\vr s'\in\bar A$ we have
\begin{equation*}
|W_k(\vr s)-W_k(\vr s')|
\le\int_{\bar A} W_{k+1}(\bar{\vr s})|t_{\mathfrak s}(\bar{\vr s}|\vr s)-t_{\mathfrak s}(\bar{\vr s}|\vr s')|d\bar{\vr s}
\le \int_{\bar A}|t_{\mathfrak s}(\bar{\vr s}|\vr s)-t_{\mathfrak s}(\bar{\vr s}|\vr s')|d\bar{\vr s}
\end{equation*}
Next, we employ a telescopic sum for the multiplicative structure of the density functions in the integrand on the right-hand side, to obtain: 
\begin{align*}
|W_k(\vr s)-W_k(\vr s')|&\le \int_{\bar A}\left|\prod_{i=1}^{n}t_i(\bar s_i|\vr s)-\prod_{i=1}^{n}t_i(\bar s_i|\vr s')\right|d\bar{\vr s}\\
	& = \int_{\bar A}\left|\sum_{j=1}^{n}\left[\prod_{i=1}^{j-1}t_i(\bar s_i|\vr s')\prod_{i=j}^{n}t_i(\bar s_i|\vr s)
	-\prod_{i=1}^{j}t_i(\bar s_i|\vr s')\prod_{i=j+1}^{n}t_i(\bar s_i|\vr s)\right]\right|d\bar{\vr s}\\
	& \le \sum_{j=1}^{n}\int_{\bar A}\left[\prod_{i=1}^{j-1}t_i(\bar s_i|\vr s')\prod_{i=j+1}^{n}t_i(\bar s_i|\vr s)
	\left| t_j(\bar s_j|\vr s)-t_j(\bar s_j|\vr s')\right|\right]d\bar{\vr s}\\
	& \le \sum_{j=1}^{n}\int_{D_j}\left| t_j(\bar s_j|\vr s)-t_j(\bar s_j|\vr s')\right|d\bar s_j\\
	& \le \sum_{j=1}^{n}\mathcal H_j\|\vr s-\vr s'\|\mathcal L(D_j)
	=\|\vr s-\vr s'\|\sum_{j=1}^{n}\mathcal H_j\mathcal L(D_j) = \|\vr s-\vr s'\| \sum_{j=1}^{n}\mathcal I_j.
\end{align*}
\end{proof}

\begin{lemma}
	\label{lem:norm_mat}
	The entry-wise one-norm and two-norm of square matrices are equivalent:
	\begin{equation*}
		n\|\Phi\|_2\le \|\Phi\|_1\le n\sqrt{n}\|\Phi\|_2,
	\end{equation*}
	where $n$ is the dimension of the matrix $\Phi = [a_{ij}]_{i,j}\in\mathbb R^{n\times n}$.
\end{lemma}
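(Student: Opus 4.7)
The plan is to sandwich the entry-wise $1$-norm $\|\Phi\|_1 = \sum_{i,j=1}^n |a_{ij}|$ between multiples of the induced $2$-norm $\|\Phi\|_2 = \sigma_{\max}(\Phi)$ by routing through the Frobenius norm $\|\Phi\|_F = \bigl(\sum_{i,j} a_{ij}^2\bigr)^{1/2}$ as an intermediary and invoking the singular value decomposition.

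For the upper bound $\|\Phi\|_1 \le n\sqrt n\,\|\Phi\|_2$, I would first apply the Cauchy--Schwarz inequality in $\mathbb{R}^{n^2}$ to the all-ones vector and the entry-wise absolute value vector $(|a_{ij}|)_{i,j}$, obtaining $\|\Phi\|_1 \le \sqrt{n^2}\,\|\Phi\|_F = n\|\Phi\|_F$. Next, writing the SVD $\Phi = U\Sigma V^T$ with singular values $\sigma_1 \ge \cdots \ge \sigma_n \ge 0$ and using the identity $\|\Phi\|_F^2 = \sum_{k=1}^n \sigma_k^2 \le n\,\sigma_{\max}^2 = n\|\Phi\|_2^2$, I obtain $\|\Phi\|_F \le \sqrt n\,\|\Phi\|_2$. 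Chaining the two estimates yields the upper bound.

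For the lower bound $n\|\Phi\|_2 \le \|\Phi\|_1$, I would traverse the same chain in reverse: $\|\Phi\|_2 = \sigma_{\max} \le \bigl(\sum_k \sigma_k^2\bigr)^{1/2} = \|\Phi\|_F$ follows from the SVD, and $\|\Phi\|_F \le \|\Phi\|_1$ from the elementary inequality $\sqrt{\sum_i x_i^2}\le \sum_i |x_i|$ applied to the vector of matrix entries, giving $\|\Phi\|_2 \le \|\Phi\|_1$. To extract the remaining factor $n$ on the left, I would then exploit squareness of $\Phi$ by considering its action on a suitably chosen unit vector (for instance the right singular vector $\vr v$ corresponding to $\sigma_{\max}$, for which $\|\Phi \vr v\|_2 = \sigma_{\max}$) and comparing the $\ell^1$- and $\ell^2$-norms of $\Phi \vr v$ in $\mathbb R^n$ via $\|\vr x\|_1 \le \sqrt n\,\|\vr x\|_2$, combined with a row-wise estimate of $\|\Phi\vr v\|_1$ in terms of $\|\Phi\|_1$.

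The main obstacle is precisely pinning down the factor $n$ in the lower bound: the naive Frobenius--spectral chain yields only $\|\Phi\|_2 \le \|\Phi\|_1$ without a dimensional factor, so the square shape of $\Phi$ (matching the ambient dimension) must be leveraged to sharpen the constant. Since the comparison between $e_{\mathrm{DBN}}$ and $e_{\mathrm{AKLP}}$ in Section~\ref{sec:case_study} is expressed in terms of the normalized quantity $\|\Phi\|_1/(n\sqrt n)$, this sharpened lower bound is the substantive content of the lemma; the upper-bound direction is a routine two-step application of Cauchy--Schwarz and the SVD.
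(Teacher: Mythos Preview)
Your upper-bound argument is correct and essentially the same as the paper's: the paper bounds each column sum by $c_j(\Phi)\le\sqrt n\,\|\Phi e_j\|_2\le\sqrt n\,\|\Phi\|_2$ and sums over $j$, which is your Cauchy--Schwarz/Frobenius chain written column-by-column rather than globally via the SVD.

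The lower bound, however, cannot be completed---not because your strategy is deficient, but because the inequality $n\|\Phi\|_2\le\|\Phi\|_1$ is \emph{false}. Take $\Phi=e_1e_1^T\in\mathbb R^{n\times n}$ (a single $1$ in the top-left entry): then $\|\Phi\|_2=\|\Phi\|_1=1$, so $n\|\Phi\|_2=n>1=\|\Phi\|_1$ for every $n\ge 2$. The best universal lower bound is exactly the ``naive chain'' $\|\Phi\|_2\le\|\Phi\|_F\le\|\Phi\|_1$ you already obtained; no dimensional factor can be extracted. Your instinct that the factor $n$ was the obstacle was correct---the obstacle is in fact insurmountable. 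The paper's own argument errs at the same point: it asserts
\[
\Bigl[\sum_{i,j}|a_{ij}|^2\Bigr]^{1/2}\le \frac{1}{n}\sum_{i,j}|a_{ij}|,
\]
which is the Cauchy--Schwarz (equivalently, power-mean) inequality with the direction reversed; the correct relation is $\ge$, as the same rank-one example shows. Fortunately the application in Section~\ref{sec:case_study} uses only the upper bound $\|\Phi\|_1/(n\sqrt n)\le\|\Phi\|_2$, so the defect in the lemma does not propagate.
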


\begin{proof}[Proof of Lemma \ref{lem:norm_mat}]
	Define $r_i(\Phi) = \sum_{j=1}^{n}|a_{ij}|$ and $c_j(\Phi) = \sum_{i=1}^{n}|a_{ij}|$. The Cauchy-Schwartz inequality implies that
	\begin{align*}
		& c_j(\Phi) \le \sqrt{n}\sqrt{\sum_{i=1}^n |a_{ij}|^2} = \sqrt{n}\|\Phi\vr e_1\|_2\le \sqrt{n}\|\Phi\|_2\\
		& \Rightarrow \|\Phi\|_1 = \sum_{j=1}^{n}c_j(\Phi)\le \sum_{j=1}^{n}\sqrt{n}\|\Phi\|_2 = n\sqrt{n}\|\Phi\|_2,
	\end{align*}
	where $\vr e_1 = [1,0,0,\ldots,0]^T$. On the other hand for any $\vr s = [s_1,s_2,\ldots,s_n]^T$ with $\|\vr s\|_2 = 1$,
	\begin{align*}
		\|\Phi\vr s\|_2 & = \left[\sum_{i=1}^{n}\left|a_{i1}s_1+a_{i2}s_2+\ldots+a_{in}s_n\right|^2\right]^{1/2}
		\le \left[\sum_{i=1}^{n}\left(|a_{i1}|^2+|a_{i2}|^2+\ldots+|a_{in}|^2\right)\right]^{1/2}\\
		&  = \left[\sum_{i,j=1}^{n}|a_{ij}|^2\right]^{1/2}\le \frac{n}{n^2}\sum_{i,j=1}^{n}|a_{ij}| = \frac{1}{n}\|\Phi\|_1.
	\end{align*}
\end{proof}

As you see here the ratio $\|\Phi\|_1/\|\Phi\|_2$ is bounded from below by the dimension of the matrix and also from above by the $n\sqrt{n}$.

\begin{lemma}[\cite{Kol06}]
	The maximum singular value of a matrix can be bounded based on its sparsity pattern. In particular for any $\Phi$,
	\begin{equation*}
		\|\Phi\|_2\le \max_{i,j:a_{ij}\ne 0}[r_i(\Phi)c_j(\Phi)]^{1/2}.
	\end{equation*}
\end{lemma}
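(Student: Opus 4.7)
The plan is to bound $\|\Phi x\|_2^2$ directly in terms of the quantity $M \doteq \max_{i,j:a_{ij}\ne 0}[r_i(\Phi)c_j(\Phi)]$, by combining the Cauchy--Schwarz inequality on each row with the observation that only nonzero entries contribute, so that the row- and column-sum bounds can be paired locally. The target is $\|\Phi\|_2^2 \le M$.

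First, I would apply Cauchy--Schwarz row-by-row. For any $x = [x_1,\ldots,x_n]^T$ with $\|x\|_2 = 1$, writing $|a_{ij}| = \sqrt{|a_{ij}|}\cdot\sqrt{|a_{ij}|}$ gives
\begin{equation*}
\Bigl|\sum_{j=1}^n a_{ij}x_j\Bigr|^2 \le \Bigl(\sum_{j=1}^n |a_{ij}|\Bigr)\Bigl(\sum_{j=1}^n |a_{ij}|\,|x_j|^2\Bigr) = r_i(\Phi)\sum_{j=1}^n |a_{ij}|\,|x_j|^2.
\end{equation*}
Summing over $i$ and swapping the order of summation yields
\begin{equation*}
\|\Phi x\|_2^2 \le \sum_{j=1}^n |x_j|^2\sum_{i=1}^n r_i(\Phi)\,|a_{ij}|.
\end{equation*}

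Next, I would bound the inner sum $\sum_i r_i(\Phi)|a_{ij}|$ uniformly in $j$. The key point is that $|a_{ij}|$ kills every index $i$ for which $a_{ij}=0$, so only the $(i,j)$ pairs with $a_{ij}\ne 0$ contribute, and for each such pair $r_i(\Phi)\le M/c_j(\Phi)$ by definition of $M$ (the column $c_j(\Phi)$ is strictly positive because it contains the nonzero entry $a_{ij}$). Therefore
\begin{equation*}
\sum_{i=1}^n r_i(\Phi)|a_{ij}| = \sum_{i:\,a_{ij}\ne 0} r_i(\Phi)|a_{ij}| \le \frac{M}{c_j(\Phi)}\sum_{i:\,a_{ij}\ne 0}|a_{ij}| = \frac{M}{c_j(\Phi)}\cdot c_j(\Phi) = M,
\end{equation*}
with the understanding that entirely zero columns contribute nothing at all. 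Plugging this back gives $\|\Phi x\|_2^2 \le M\sum_j |x_j|^2 = M$, hence $\|\Phi\|_2 \le \sqrt{M}$, as claimed.

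The only delicate step is the second one, where one must be careful to restrict the summation to the support of the nonzero entries before invoking the definition of $M$; otherwise one would only recover the looser bound $\|\Phi\|_2 \le \sqrt{\|\Phi\|_1\|\Phi\|_\infty} = \sqrt{(\max_j c_j)(\max_i r_i)}$. The sparsity-aware improvement hinges precisely on that pairing. I expect no further obstacle beyond bookkeeping, and the argument naturally handles boundary cases (zero rows or columns) since such indices contribute zero to $\|\Phi x\|_2^2$.
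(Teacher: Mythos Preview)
Your argument is correct and complete: the row-wise Cauchy--Schwarz splitting followed by the sparsity-aware pairing $r_i(\Phi)\le M/c_j(\Phi)$ on the support of column $j$ is exactly the standard proof of this Schur-type bound, and you have handled the zero-column edge case properly. The paper itself does not supply a proof of this lemma; it merely quotes the result from \cite{Kol06}, so there is nothing to compare against beyond noting that your proof is the classical one.
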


\end{document}